\theoremstyle{plain}
\newtheorem{theorem}{Theorem}
\newtheorem{lemma}[theorem]{Lemma}                              % [theorem] ==> theorems and lemmas will share a counter
\newtheorem{proposition}[theorem]{Proposition}
\theoremstyle{definition}
\newtheorem{definition}[theorem]{Definition}
\newtheorem{remark}[theorem]{Remark}
\newtheorem{assumption}[theorem]{Assumption}
\def \caratt {{\mathds{1}}}
\providecommand{\keywords}[1]{\textbf{Keywords:} #1}
\newcommand{\norm}[1]{\left\|{#1}\right\|}
\def \phi {{\varphi}}
\begin{document}

\title{Systemic risk in a mean-field model of interbank lending with self-exciting shocks}

\author{Anastasia Borovykh\thanks{Dipartimento di Matematica, Universit\`a di Bologna, Bologna, Italy.
\textbf{e-mail}: anastasia.borovykh2@unibo.it} \and Andrea Pascucci\thanks{Dipartimento di Matematica, Universit\`a di
Bologna, Bologna, Italy. \textbf{e-mail}: andrea.pascucci@unibo.it}
\and Stefano La Rovere\thanks{NIER Ingegneria, Bologna, Italy. \textbf{e-mail}: s.larovere@niering.it}}

\maketitle

\begin{abstract}
In this paper we consider a mean-field model of interacting diffusions for the monetary reserves in which the reserves are subjected to a self- and cross-exciting shock. This is motivated by the financial acceleration and fire sales observed in the market. We derive a mean-field limit using a weak convergence analysis and find an explicit measure-valued process associated with a large interbanking system. We define systemic risk indicators and derive, using the limiting process, several law of large numbers results and verify these numerically. We conclude that self-exciting shocks increase the systemic risk in the network and their presence in interbank networks should not be ignored.
\end{abstract}
\keywords{Systemic risk, Hawkes process, interacting jump diffusion, interbank lending, weak convergence}
\section{Introduction}
An important financial issue is understanding the risk in financial systems with interacting entities. Many previous research focuses on contagion through interbank lending agreements, however contagion can occur through multiple other channels, e.g. linked balance sheets that may result in fire sales
%. When an institution decides to liquidate a large part of its assets due to e.g. a need for liquidity, this might depress the asset price, affecting other institutions that hold similar assets in their portfolios. The effects of these fire sales are
(see e.g. \citet{capponi15a} and \citet{chen14}) and the so-called financial acceleration. %Whenever an institution receives a negative shock, its counterparties might react by pulling out their funds or imposing stricter lending rules. In both ways this initial shock can result in another shock to the banks asset values.
 In \citet{cont10} the authors argue that one should not
ignore the compounded effect of both correlated market events and default contagion, since it can make the network considerably more vulnerable to default cascades. Motivated by the above mentioned research, we choose to model the effects of the self-exciting fire sales as well as the financial acceleration by including a self- and cross-exciting Hawkes process, as introduced in \citet{hawkes71}, in the dynamics for the monetary reserve of the bank and combine this with the default propagation through interbank lending agreements to study the robustness of the network.

Modelling the financial network can be done using a so-called mean-field model. Here the matrix of interbank borrowing/lending activities is exogenously specified and the dynamics of the banks' monetary reserves depend on stochastic idiosyncratic events and on an interaction term, modelled through an empirical distribution of the system states, which captures the type of interaction with the other nodes in the system. One way of studying these interacting systems is by investigating the behaviour of the system as the number of nodes approaches infinity (i.e. propagation of chaos). In \citet{capponi15} the authors consider an interacting model of the monetary reserve processes where the drift term represents interbank short-time lending and the monetary reserve is additionally subjected to a banking sector indicator which drives additional in-/out-flows of cash. By means of a detailed weak-convergence analysis they conclude that the underlying limit state process has purely diffusive dynamics and the contribution of the banking sector jump process is reflected only in the drift. In \citet{nadtochiy17} the authors use the mean-field approach with an interaction through hitting times in estimating systemic failure. 

The large limit behavior of a system has also been studied in a portfolio setting. By means of a weak converence analysis of \citet{giesecke13} study the behavior of the default intensity in a large portfolio where the intensity is subjected to additional sources of clustering through exposure to a systematic risk factor and a contagion term. The law of large numbers (LLN) result is proven under the assumption that the systematic risk vanishes in the large-portfolio limit. In \citet{giesecke15} the authors extend the previous result for general diffusion dynamics for the systemic risk factor without the vanishing assumption, producing a stochastic PDE for this density in the limit, as opposed to a PDE. In \citet{spiliopoulos14} the LLN result is extended by proving a central limit theorem (CLT) in a similar setting, thus quantifying the fluctuations of the empirical measure (and thereby also the loss from default) around its large portfolio limits. In \citet{bush11} the large portfolio limit for assets following a correlated diffusion is shown to approach a measure whose density satisfies an SPDE, while in \citet{hambly17} a similar result is proven for a stochastic volatility model for the asset price. Finally, \citet{sirignano15} and \citet{sirignano16} use mean-field and large portfolio approximation methods for the analysis of large pools of loans. 

%Since multiple defaults are considered to be rare events, using regular Monte Carlo methods may complicate the computation of these rare event probabilities. A commonly employed variance reduction strategy is the use of importance sampling which samples from a distribution in which the probability of the rare event occurring is higher. The difficulty with implementing importance sampling is computing the change of measure under which the random variables need to be simulated. The authors in \citet{carmona09} propose an Monte Carlo scheme based on the interacting particle system that uses samples from the background Markov chain. In \citet{blanchet12} the authors consider a network consisting of insurers and re-insurers and propose a state-dependent importance sampling Monte Carlo method for computing the probability density of the time of default of a subset of nodes in the network.

The aim of our paper is to investigate the systemic risk in a network when incorporating both self- and cross-exciting shocks as well as interbank lending in the monetary reserve process of the bank. The excitement comes from the effect that past movements in both the asset value of the bank itself as well as that of its neighbors have on the current variations in its asset value. These effects are modelled using a Hawkes process. Self-exciting processes have previously been used in portfolio credit risk computation from a top-down approach, see \citet{ait15}, \citet{errais10} and \citet{cvitanic12}. In this work we model the monetary reserve process of a bank through a mean-field interaction diffusion with an additional Hawkes distributed jump term. We study the behavior of the system as the number of nodes approaches infinity by deriving the weak limit of the empirical measure of this interacting system. %In other words, we start with each monetary reserve of the nodes behaving according to a diffusion with an interacting drift term corresponding to interbank lending activities and subjected to a Hawkes shock and study the behavior of the system as the number of nodes approaches infinity.

In particular our convergence result is based on the analysis of \citet{delattre16}, where the authors show that the intensity of a Hawkes process in the limit of a fully connected network tends to behave as that of a non-homogeneous Poisson process. We show that the underlying limit process for the monetary reserves of the nodes has purely diffusive dynamics and the effect of the Hawkes process is reflected in a \emph{time-dependent} drift coefficient. Then we define several risk indicators and use the weak convergence analysis to derive the law of large numbers approximations to explicitly show the effects of the Hawkes process on the risk in a large interbank network. In the numerical section we then compare the LLN aproximations with the actual values simulated through a Monte-Carlo method and conclude that in a model of interbank networks, the default risk is indeed higher when we incorporate the self- and cross-exciting shocks.

The rest of the paper is structured as follows: in Section \ref{sec2} we define the Hawkes process and give a motivation for incorporating it in the interbank network. In Section \ref{sec3} we introduce the mean-field model for the log-monetary reserve process and study through simulations the effects of incorporating the self-exciting jump intensity and in particular compare it to the independent Poisson intensity. In Section \ref{sec4} we derive the weak convergence of the empirical mean of monetary reserves, explicitly characterize the weak limit measure-valued process and provide several results for extensions of the model. Finally, in Section \ref{sec5} we derive several measures of systemic risk in the network and numerically validate the accuracy of the derived limiting process.

\section{The framework}\label{sec2}
\subsection{Motivation}
A known source of systemic risk in financial networks is the propagation of default due to interbank exposures such as loans, where the failure of a borrowing node to repay its loans, may consequently cause a loss in liquidity of the lenders as well, in this way propagating the default through the network. Besides interbank exposures, another common cause of default propagation are fire sales. If one institution decides to liquidate a large part of its assets, depressing the price, this causes a loss at the institutions holding the same assets, creating a \emph{cross-exciting} spiral across the institutions. Therefore, institutions that do not have mutual counterparty exposure can still suffer financial distress if they have holdings of common assets on their balance sheets. As illustrated by \citet{glasserman15}, the effects from these so-called fire sales can be even greater than the contagion effects due to counterparty exposures.

A \emph{self-exciting} effect present in financial networks is known as financial acceleration and refers to the fact that current variations in the asset side of the balance sheet depend on past variations in the assets themselves. In other words, a shock affecting the banks portfolio can cause creditors to claim their funds back or tighten the credit conditions, in this way causing an additional shock for the bank. %In \citet{battiston12} the authors incorporated this effect by including a jump in the SDE for robustness where the jump at time $t$ depends also on the realization of the robustness at $t-1$. The penalty associated to this jump depends on the change in the robustness. This means that in case a bank is hit by a negative shock its partners react only if this shock is perceived as abnormal given the current market conditions.

As has been mentioned in \citet{cont10}, while interbank lending itself may not be a significant cause of default propagation, it is important to account for both the correlated effects of default contagion through lending agreements as well as exposure to common market events. Here, we choose to model the correlated effects of the fire sales, financial acceleration and the interbank lending structure on both the default propagation as well as overall loss in the network through a Hawkes counting process. The shocks affecting the portfolio of the institution arrive conditional on the infinite history of previous shocks to both the institutions own assets as well as those of the other nodes in the network provided that they share common assets.

\subsection{Hawkes processes}\label{sec21}
Specific types of events that are observed in time do not always arrive in evenly spaced intervals, but can show signs of clustering, e.g. the arrival of trades in an order book, or the contagious default of financial institutions. Therefore, assuming that these events happen independently is not a valid assumption. A Hawkes process (HP), also known as a self-exciting process, has an intensity function whose current value, unlike in the Poisson process, is influenced by past events. In particular, if an arrival causes the conditional intensity to increase, the process is said to be self-exciting, causing a temporal clustering of arrivals. Hawkes processes can be used for modelling credit default events in a portfolio of securities, as has been done in e.g. \citet{errais10} or for modelling asset prices using a mutually exciting jump component to model the contagion of financial shocks over different markets (\citet{ait15}). An overview of other applications of Hawkes processes in finance, in particular in modelling the market microstructure, can be found in e.g. \citet{bacry15}.

Let $(\Omega,{\mathcal F},\mathbb{F},\mathbb{P})$ be a complete filtered probability space where the filtration ${\mathbb{F}}=({\mathcal{F}_t})_{t\geq0}$ satisfies 
the usual condition. Hawkes processes (\citet{hawkes71}) are a class of multi-variate counting processes
$(N_t^1,...,N_t^M)_{t\geq 0}$ characterized by a stochastic intensity vector
$(\lambda_t^1,...,\lambda_t^M)_{t\geq 0}$ which describes the $\mathcal{F}_t$-conditional mean
jump rate per unit of time, where $\mathcal{F}_t$ is the filtration generated by $(N^i)_{1\leq
i\leq M}$ up to time $t$. Consider the set of nodes $I_{M}:= \{1,\dots,M\}$. Define the kernel $g(t)=(g^{i,j}(t),
(i,j)\in I_{M}\times I_{M})$ with $g^{i,j}(t):\mathbb{R}_+\to\mathbb{R}$ and the constant
intensity $\mu=(\mu^i, i\in I_{M})$ with $\mu^i\in\mathbb{R}_+$.
\begin{definition}[Hawkes process]\label{defhawk}
A linear Hawkes process with parameters $(g,\mu)$ is a family of $\mathcal{F}_t$-adapted counting
processes $(N^i_t)_{i\in I_{M},t\geq 0}$ such that:
\begin{enumerate}
\item almost surely for all $i\neq j$, $(N_t^i)_{t\geq 0}$ and $(N_t^j)_{t\geq 0}$ never jump simultaneously,
\item for every $i\in I_{M}$, the compensator $\Lambda_t^i$ of $N_t^i$ has the form $\Lambda_t^i=\int_0^t\lambda_s^ids$, where the intensity process $(\lambda_t^i)_{t\geq 0}$ is given by
\begin{align}\label{eq:hawkesint}
\lambda^i_t = \mu^i + \sum\limits_{j=1}^M\int_{[0,t[}g^{i,j}(t-s)dN^j_s.
\end{align}
\end{enumerate}
\end{definition}
In other words, $g^{i,j}$ denotes the influence of an event of type $j$ on the arrival of $i$:
each previous event $dN^j_s$ raises the jump intensity $(\lambda^i_t)_{i\in I_{M}}$ of its
neighbors through the function $g^{i,j}$. The compensated jump process $N_t-\int_{0}^t\lambda_sds$
is a $\mathcal{F}_t$-local martingale. For $g$ a positive and a decreasing function of time $t$,
the influence of a jump decreases and tends to 0 as time evolves.

Following Proposition 3 in \citet{delattre16}, one can rewrite the Hawkes process in the sense of
Definition \ref{defhawk} as a Poisson-driven SDE with the i.i.d. family of $\mathcal{F}_t$-Poisson
measures $(\pi^i(ds,dz), i\in I_{M})$ with intensity measure $(ds,dz)$:
\begin{align}\label{eq:hawkpoiss}
N_t^i = \int_0^t \int_0^\infty \caratt_{\{z\leq
\mu_t+\sum\limits_{j=1}^M\int_{[0,s[}g^{i,j}(t-s)dN_s^j\}}\pi^i(ds,dz).
\end{align}

Next we state a well-posedness result, based on Theorem 6 in \citet{delattre16}:
\begin{lemma}[Existence and uniqueness]\label{exist} Let $g^{i,j}$ be locally integrable for all $(i,j)\in I_{M} \times I_{M}$;
there exists a pathwise unique Hawkes process $(N_t^i)_{i\in I_{M},t\geq 0}$, such that
$\sum\limits_{i=1}^M\mathbb{E}[N_t^i]<\infty$ for all $t\geq 0$.
\end{lemma}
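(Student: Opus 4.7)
My plan is to construct the Hawkes process via a Picard iteration on the Poisson-driven SDE representation \eqref{eq:hawkpoiss} and then close the argument with a coupling step for uniqueness. Concretely, I would start from $N^{i,0}_t \equiv 0$ and define inductively
\begin{align}
\lambda^{i,k+1}_t &= \mu^i + \sum_{j=1}^M \int_{[0,t[} g^{i,j}(t-s)\, dN^{j,k}_s, \\
N^{i,k+1}_t &= \int_0^t \int_0^\infty \caratt_{\{z \leq \lambda^{i,k+1}_s\}}\, \pi^i(ds,dz),
\end{align}
with the convention that the positive part is taken inside the indicator so that the candidate intensity is admissible. All iterates are driven by the same fixed family of i.i.d.\ Poisson measures $\pi^i$, which is the key device enabling both convergence and uniqueness.

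The first quantitative step is an a priori $L^1$ bound. Writing $m^i_k(t) := \mathbb{E}[N^{i,k}_t]$ and using Fubini together with the Poisson compensator, I would obtain
\begin{align}
m^i_k(t) \leq \mu^i t + \sum_{j=1}^M \int_0^t |g^{i,j}(t-s)|\, m^j_{k-1}(s)\, ds.
\end{align}
Summing over $i$ and setting $G(r) := \max_i \sum_j |g^{i,j}(r)|$, which is in $L^1_{\mathrm{loc}}(\mathbb{R}_+)$ by hypothesis, yields a scalar linear Volterra inequality whose iteration gives a bound uniform in $k$ on every compact interval (this is the generalized Gronwall lemma for locally integrable convolution kernels). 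Next I would show that $(N^{i,k})_k$ is Cauchy in $L^1$ by coupling successive iterates via the shared $\pi^i$: since $|\caratt_{\{z \leq a\}} - \caratt_{\{z \leq b\}}| = \caratt_{\{a\wedge b < z \leq a\vee b\}}$, the $\pi^i$-integral yields
\begin{align}
\mathbb{E}|N^{i,k+1}_t - N^{i,k}_t| \leq \sum_{j=1}^M \int_0^t |g^{i,j}(t-s)|\, \mathbb{E}|N^{j,k}_s - N^{j,k-1}_s|\, ds,
\end{align}
and iterating this bound produces a summable series in $k$, so a pathwise $L^1$ limit $N^i_t$ exists on each compact interval. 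Monotone convergence transfers the a priori bound to the limit, giving $\sum_i \mathbb{E}[N^i_t] < \infty$, and allows passing to the limit in the SDE to verify that $N^i$ solves \eqref{eq:hawkpoiss} with intensity \eqref{eq:hawkesint}.

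For uniqueness, I would apply the same coupling to two candidate solutions $(N^i)$ and $(\widetilde N^i)$ driven by the same $\pi^i$; the identical computation gives a homogeneous Volterra inequality for $\mathbb{E}|N^i_t - \widetilde N^i_t|$, which forces this quantity to vanish and hence the two processes coincide pathwise. The main obstacle I expect is precisely this Gronwall step: because $g^{i,j}$ is only assumed locally integrable, the kernel $G$ has no pointwise bound, so the iterated convolutions $G^{*n}$ must be controlled via the $L^1_{\mathrm{loc}}$-theory of linear Volterra equations rather than by the elementary Bellman--Gronwall inequality. Once that technical point is accepted (it is the content of the proof of Theorem~6 in \citet{delattre16}), the remaining steps are essentially mechanical.
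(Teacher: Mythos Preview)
Your proposal is correct and aligned with the paper: the paper does not supply its own proof but simply invokes Theorem~6 of \citet{delattre16}, and the Picard iteration with Poisson-measure coupling that you outline is exactly the argument used there. In effect you have filled in the details the paper defers to the reference, including the correct identification of the generalized Gronwall step for locally integrable kernels as the only nontrivial technical point.
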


By introducing the pair $\{t_k, n_k\}_{k=1}^{K_t}$, where $t_k$ denotes the time of event $k$,
$n_k\in I_{M}$ is the event type and $K_t=\sum\limits_{i=1}^M N_t^i$ is total number of event
arrivals up to time $t$, we can rewrite the intensity as
\begin{align*}
\lambda^i_t = \mu^i + \sum\limits_{k=1}^{K_t}g^{i,n_k}(t-t_k), \qquad i \in I_M.
\end{align*}
A common choice for $g^{i,j}(t)$ is an exponential decay function defined as
\begin{align}\label{eq:expjump}
g^{i,j}(t) = \alpha^{i,j}e^{-\beta^i t},
\end{align}
so that $\lambda^i_t$ jumps by $\alpha^{i,j}$ when a shock in $j$ occurs, and then decays back
towards the mean level $\mu^i$ at speed $\beta^i$. Note that this function satisfies the local
integrability property, i.e. $g^{i,j}\in L^1_{\text{loc}}(\mathbb{R}_+)$. If $g^{i,j}$ is
exponential then the couple $(N_t,\lambda_t)$ is a Markov process \cite{bacry15}. The simulation of a Hawkes
process can be done using what is known as Ogata's modified thinning algorithm, see for more details \citet{ogata81} and \citet{daley07}.

If the Hawkes process $(N_t^i)_{i\in I_{M},t\geq 0}$ satisfies certain conditions, we have
the following stationarity result (see \citet{bremaud96} and \citet{bacry14} for details), which
will come in useful in the further sections.
\begin{proposition} \label{nonexpl}
Suppose that the matrix $\Phi$ with entries $\int_0^\infty |g^{i,j}(t)|dt$ has a spectral radius strictly less than one.  %In particular, for the exponential excitation, we have $\Phi_{i,j}=\alpha^{i,j}/\beta^i$.
Then there exists a unique multi-variate Hawkes process $(N^i_t)_{t\geq 0}$ for $i\in  I_{M}$ with
stationary increments and the associated intensity as in \eqref{eq:hawkesint} is a stationary
process. Moreover we have %$\mathbb{E}[\lambda_t^i]<\infty$ and
$\mathbb{E}[|\lambda_t|^2]< \infty$.
\end{proposition}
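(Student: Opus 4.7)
The plan is to prove existence and stationarity via the Poisson cluster (Hawkes--Oakes) representation, to deduce uniqueness from subcriticality, and finally to obtain the second-moment bound by taking expectations in the intensity equation \eqref{eq:hawkesint} and exploiting stationarity.

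First I would extend the driving Poisson measures $\pi^i(ds,dz)$ appearing in \eqref{eq:hawkpoiss} to $\mathbb{R}\times[0,\infty)$ and use the cluster representation. The ``immortals'' of type $i$ arrive on $\mathbb{R}$ as a homogeneous Poisson process of rate $\mu^i$, and are the roots of independent multi-type Galton--Watson trees in which a particle of type $j$ born at time $s$ produces type-$i$ offspring on $(s,\infty)$ as an inhomogeneous Poisson process with intensity $g^{i,j}(\cdot-s)$. The mean matrix of the associated branching process has entries dominated by $\Phi^{ij}=\int_0^\infty |g^{i,j}(t)|\,dt$, and since the spectral radius of $\Phi$ is strictly less than one, the branching is subcritical, so each cluster is almost surely finite. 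Setting $N^i_t$ equal to the number of type-$i$ particles born in $(0,t]$, the shift-invariance of the immortal Poisson measure on $\mathbb{R}$ transfers to stationarity of the increments of $(N^i)_{i\in I_M}$, and a direct computation of the $\mathcal{F}_t$-conditional jump rate of $N^i$ recovers exactly the intensity formula \eqref{eq:hawkesint}.

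For uniqueness among Hawkes processes with stationary increments, I would set up the Picard map $T$ sending an $M$-tuple of stationary simple point processes with finite mean intensity to the process defined by the right-hand side of \eqref{eq:hawkpoiss}; the spectral radius condition makes $T$ a contraction in the norm induced by the vector of mean intensities, so the fixed point, and hence the stationary Hawkes process, is unique. For the moment bound, I would take expectations in \eqref{eq:hawkesint}: by stationarity $m^i:=\mathbb{E}[\lambda^i_t]$ is $t$-independent, and Fubini together with $\mathbb{E}[dN^j_s]=m^j\,ds$ gives $m=\mu+\Phi m$, uniquely solvable as $m=(I-\Phi)^{-1}\mu$ because the spectral radius of $\Phi$ is less than one. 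Applying the same strategy to $\mathbb{E}[\lambda^i_t\lambda^k_t]$ yields a closed linear system for the covariance matrix whose coefficient operator is again invertible under the spectral radius condition, giving $\mathbb{E}[|\lambda_t|^2]<\infty$.

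The main obstacle I anticipate is the uniqueness step: a stationary solution need not \emph{a priori} be the one built from the canonical driving Poisson measures, so controlling it requires either a coupling argument with the cluster representation or the contraction property of the Picard map on the infinite-dimensional space of stationary point-process laws. In both cases the spectral radius condition is doing the essential work, exactly as it does in ensuring subcriticality of the branching construction and invertibility of $I-\Phi$ in the moment calculation.
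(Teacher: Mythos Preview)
The paper does not give its own proof of this proposition; it is stated as a quotation of known results with a parenthetical reference to \citet{bremaud96} and \citet{bacry14}, and is used later only as a black box guaranteeing $\mathbb{E}[|\lambda_t|^2]<\infty$. So there is nothing in the paper to compare your argument against line by line.

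That said, your sketch is in line with how the cited literature actually establishes the result. The cluster (Hawkes--Oakes) construction under subcriticality is the standard route to existence of a version with stationary increments; the Picard/contraction argument on the space of stationary point-process laws is essentially the Br\'emaud--Massouli\'e approach to uniqueness; and the first-moment identity $m=(I-\Phi)^{-1}\mu$ is exactly what one obtains by taking expectations in \eqref{eq:hawkesint}. The one place where your outline is thin is the second-moment step: writing ``the same strategy'' for $\mathbb{E}[\lambda^i_t\lambda^k_t]$ glosses over the fact that squaring the stochastic integral against $dN^j_s$ produces both a diagonal term from the quadratic variation $[N^j]=N^j$ and cross terms, so the resulting linear system is for the full second-order structure (equivalently, the Bartlett spectrum) rather than a simple matrix equation in $\Phi$. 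It is still solvable under the spectral-radius condition, but the bookkeeping is heavier than for the mean. Your flagging of uniqueness as the delicate point is accurate, and again matches the emphasis in the references the paper defers to.
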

Furthermore, we remark here that a multi-dimensional Hawkes process with stationary increments is uniquely defined by its first- and second-order statistics (\citet{bacry14}).

\section{The mean-field model}\label{sec3}
In this section we define the mean-field model for the log-monetary reserves of each of the nodes in the model. The interaction between the nodes is defined through the drift term and additionally we consider the reserve process to be subjected to a self- and cross-exciting Hawkes distributed shock.
\subsection{Definition}
Define $\mathcal{F}_t=\sigma((W_s^i, N_s^i),0\leq s\leq t, i\in\mathbb{N})$. Assume that, for $i \in  I_{M}$ the log-monetary reserves of the $i$-th bank satisfies the following stochastic differential equation (SDE) %(see also \citet{ait15}):
\begin{align}
dX^i_t = \frac{a^i}{M}\sum\limits_{k = 1}^{M}(X^k_t-X^i_t)dt + \sigma^i d W_t^i + c^i dN^i_t,
\end{align}
with $X_0^i\in\mathbb{R}_+$ the initial reserves for each bank and where $a^i\geq 0$,
$\sigma^i\geq 0$ and $c^i:=\hat c^i/M<0$ are constants for each $i\in I_{M}$.
The process $W(t) = \{ W_t^i\}_{i=1}^M$ is a $M$-dimensional Brownian motion, 
and $N_t = \{N_t^i\}_{i=1}^M$ is the vector of Hawkes processes with self-exciting intensity $\lambda^i_t$ as defined in \ref{sec21}. With the drift term defined in this way, if bank $k$ has more (less) log-monetary reserves than bank $i$, i.e. $X_t^k>X_t^i$ ($X_t^k<X_t^i$), bank $k$ is assumed to lend (borrow) a proportion of the surplus to (deficit from) bank $i$, with proportionality factor $a^i/M$. A jump in the Hawkes process $i$ affects the corresponding $X_t^i$ through the proportionality factor $c^i$ and increases the intensity $\lambda_t^j$ for $j\in I_M$ if $g^{i,j}(t)\neq 0$. In this way the jump activity varies over time resulting in a clustering of the arrival of the jumps and the shocks propagate through the network in a contagious manner through the contagion function $g^{i,j}(t)$. We thus interpret the jump term $c^i dN_t^i$ as a self- and cross-exciting \emph{negative} effect that occurs due to financial acceleration and fire sales, resulting in a decrease in a banks monetary reserve. In \citet{capponi15} the authors considered a similar mean-field model for the monetary reserves but assumed the jumps to occur at independent Poisson distributed random times. However, not accounting for the clustering effect of the jumps might cause a significant underestimation of the systemic risk present in the network. We define a default level $D \leq 0$ and say that bank $i$ is in a default state at time $T$ if its log-monetary reserve reached the level $D$ at time $T$. We remark that in our model even if bank $i$ has defaulted, i.e. its monetary reserve reaches a negative level, it continues to participate in the interbank activities borrowing from the counterparties until it e.g. reaches a positive reserve level again. In other words, the level of monetary reserves takes in values in $\mathbb{R}$. %\blu{(This can be resolved by adding the $X_t$ term I guess, but then it adds problems with regard to parameter choices and the zero level never being reached.)}.
We will work in the following setting:
\begin{assumption}[Parameters]\label{ass1}
We collect the parameters associated with the dynamics of the $i$-th monetary reserve process
$i\in I_{M}$ as
\begin{align}
p^i:=(a^i, \sigma^i, c^i)\in (\mathbb{R}_+\times\mathbb{R}_+\times\mathbb{R}_-).
\end{align}
We denote by $\delta_{x}$ the Dirac-delta measure centered at $x$ and we set
$$q^M=\frac{1}{M}\sum\limits_{i=1}^M\delta_{p_i},\qquad
 \phi_0^M=\frac{1}{M}\sum\limits_{i=1}^M\delta_{X_0^i}.$$
We assume $\lim\limits_{M\rightarrow\infty}q^M = \delta_{p^*}$, i.e. $p^i\rightarrow
p^*:=(a,\sigma,c)$ as $i\rightarrow\infty$ and $\lim\limits_{M\rightarrow\infty}\phi_0^M =
\delta_{x}$, i.e. $X_0^i\rightarrow x$ as $i\rightarrow\infty$. We take the exponential decay
function for the contagion
 \begin{align}
g^{i,j}(t-s) = \frac{1}{M}g(t-s):=\frac{1}{M}\alpha e^{-\beta (t-s)},
\end{align}
 which is a locally square-integrable function with $\alpha,\beta\in\mathbb{R}_+$. Finally, the parameters are assumed to all be bounded by a constant $C_p$.
\end{assumption}
We remark here that the results developed in this paper hold also for more general distributions,
i.e. $\lim\limits_{M\rightarrow\infty}q^M=q$ and $\lim\limits_{M\rightarrow\infty}\phi_0^M =
\phi_0$, but for simplicity of the results we assume the parameter vector converges to a constant
vector.

Defining the reserve average as
\begin{align}
\bar X_t = \frac{1}{M}\sum\limits_{i=1}^MX^i_t,
\end{align}
we can rewrite the SDE as a mean-field interaction SDE
\begin{align}\label{eq:meanfieldsde}
dX^i_t = a^i(\bar X_t-X^i_t)dt + \sigma^i dW_t^i + c^idN^i_t.
\end{align}
From \eqref{eq:meanfieldsde} we see that the processes $(X^i_t)$ are mean-reverting to their ensemble average $(\bar X_t)$ at rate $a^i$.

\begin{lemma} There exists a unique solution $(X_t^1,...,X_t^M)$ to the system of SDEs given by \eqref{eq:meanfieldsde} for $i\in  I_{M}$.
\end{lemma}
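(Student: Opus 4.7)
The plan is to exploit the fact that the Hawkes intensities $(\lambda_t^i)_{i\in I_M}$ defined in \eqref{eq:hawkesint} are autonomous, in the sense that they depend only on the past jump times of $(N^j)_{j\in I_M}$ and not on the reserve processes $(X^i)$. This decouples the construction of the jump drivers from the SDE for $X$.

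First I would apply Lemma \ref{exist} to construct, on $(\Omega,\mathcal{F},\mathbb{F},\mathbb{P})$, a pathwise unique multivariate Hawkes process $(N_t^i)_{i\in I_M, t\geq 0}$ with parameters $(g,\mu)$: this is possible because under Assumption \ref{ass1} the kernel $g^{i,j}(t)=\frac{1}{M}\alpha e^{-\beta t}$ is locally integrable (in fact locally bounded) on $\mathbb{R}_+$. Throughout, we may and shall assume that $W=(W^i)_{i\in I_M}$ is independent of the Poisson measures $(\pi^i)_{i\in I_M}$ driving $N$ through \eqref{eq:hawkpoiss}.

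Having $N$ at our disposal, the system \eqref{eq:meanfieldsde} can be rewritten in vector form as
\begin{align}
d\mathbf{X}_t = A\,\mathbf{X}_t\, dt + \Sigma\, dW_t + C\, dN_t, \qquad \mathbf{X}_0=(X_0^1,\dots,X_0^M),
\end{align}
where $A\in\mathbb{R}^{M\times M}$ has entries $A_{ii}=a^i(1-M)/M$ and $A_{ij}=a^i/M$ for $j\neq i$, and $\Sigma,C$ are the diagonal matrices with entries $\sigma^i$ and $c^i$ respectively. The drift $\mathbf{x}\mapsto A\mathbf{x}$ is globally Lipschitz (with constant controlled by $C_p$ via Assumption \ref{ass1}), the diffusion coefficient $\Sigma$ is constant, and the jump term $C\, dN_t$ is an additive driver whose compensator is finite on compact intervals by Lemma \ref{exist}. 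Existence and uniqueness of a strong càdlàg solution then follow from the standard theorem for SDEs with jumps driven by a semimartingale with Lipschitz coefficients (for instance Theorem V.7 in Protter or the Picard iteration scheme applied pathwise between successive jumps of $N$).

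Alternatively, one can obtain the same conclusion by the variation-of-constants formula: between consecutive jump times $t_{k}<t_{k+1}$ of the pooled process $K_t=\sum_{i=1}^M N_t^i$, the system is a linear Gaussian SDE admitting the explicit solution
\begin{align}
\mathbf{X}_t = e^{A(t-t_k)}\mathbf{X}_{t_k} + \int_{t_k}^t e^{A(t-s)}\Sigma\, dW_s,\qquad t\in[t_k,t_{k+1}),
\end{align}
and at each jump time $t_{k+1}$ the reserves are updated by $\mathbf{X}_{t_{k+1}}=\mathbf{X}_{t_{k+1}-}+C\Delta N_{t_{k+1}}$. Since $K_t<\infty$ almost surely (Lemma \ref{exist}), iterating this construction yields a pathwise unique global solution. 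The only delicate point is to verify that one may indeed treat $N$ as exogenous to the $X$-dynamics, which is immediate from \eqref{eq:hawkesint}; beyond that, the argument is entirely standard.
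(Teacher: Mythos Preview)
Your proposal is correct and follows essentially the same strategy as the paper: observe that the Hawkes driver $N$ is exogenous to $X$, invoke Lemma \ref{exist} to obtain a well-defined (non-exploding) multivariate Hawkes process, and then solve the linear diffusion between consecutive jump times, pasting in the jump at each $\tau_k$. The paper carries out exactly this interlacing construction (citing Theorem 9.1 of Ikeda--Watanabe and the linear-SDE result of Cox), while you additionally offer the one-line shortcut via Protter's Lipschitz theorem for semimartingale-driven SDEs; both routes are standard and equivalent here.
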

\begin{proof}
%\blu{(Check this, check that no problems with the intensity of the jump changing; that uniqueness still holds and also in case of no $X_t^{1/2}$ in the BM part that uniqueness of SDE without jumps holds)}.
The proof is similar to Theorem 9.1 in \citet{ikeda81}. Define $Y_t^i$ to be the solution of
the
SDE \eqref{eq:meanfieldsde} without jumps.
By Example 2 in \citet{cox95}, we know that the SDE has a unique strong solution
$(Y_t^1,...,Y_t^M)$. By definition of a Hawkes process we have that $N^1,...,N^M$ never jump
simultaneously: this implies the existence of an increasing sequence of jump times $(\tau_n)_{n\in
\mathbb{N}}$ such that %so that $0<\tau_1<...<\tau_n<...<+\infty$ and
$\lim\limits_{n\rightarrow\infty}\tau_n=+\infty$. Then we can define
\begin{align}\label{eq:defunique}
X_t^{(i,1)}:=\begin{cases} Y_t^i, \qquad 0\leq t< \tau_1,\\ Y_{\tau_1-}^i+\caratt_{k=i}c^i, \qquad
t=\tau_1, \textnormal{ if there is a jump in } N^k.
\end{cases}
\end{align}
From Lemma \ref{exist}, we know that there exists a unique Hawkes process $(N_t^i)_{t\geq 0}$ for
$i\in I_{M}$, thus we can say that $X_t^{(i,1)}$ is the unique solution to \eqref{eq:meanfieldsde}
for $t\in[0,\tau_1]$. Then we define $\bar X_t^{(i,2)}$ on $t\in [0,\tau_2-\tau_1]$ similar to
\eqref{eq:defunique} using as initial state $\bar X_0^i:=X_{\tau_1}^{(i,2)}$ and driving factors
$\bar W_t^i:=W_{t+\tau_{1}}^i-W_{\tau_{1}}^i$ and $\bar N_t^i:= N_{t+\tau_{1}}^i-N_{\tau_{1}}^i$.
Then we set
\begin{align}
X_t^{i}:=\begin{cases} X_t^{i,1}, \qquad 0\leq t < \tau_1,\\ \bar X_{t-\tau_1}^{(i,2)} \qquad
\tau_1\leq t\leq \tau_2.
\end{cases}
\end{align}
So that $X_t^i$, $t\in [0,\tau_2]$ is the unique solution to \eqref{eq:meanfieldsde}. Iterating the above process, we have that $X_t^i$ is determined uniquely on the time interval $[0,\tau_n]$ for each $n\in\mathbb{N}$.
\end{proof}

\subsection{Simulation}
Consider, for the sake of illustration, the following SDE
\begin{align}
dX^i_t = a(\bar X_t-X^i_t)dt + \sigma d\tilde W_t^i + c dN^i_t,
\end{align}
with $\tilde W_t^i := \rho W_t^0 + \sqrt{1-\rho^2}W_t^i$, where $W_t^i$, $i=0,...,M$ are
independent Brownian motions and $W_t^0$ represents common noise (similar to the setting in \citet{carmona13}). We keep the parameters of the constant intensity and the excitation function
$g^{i,j}=\alpha^{i,j}e^{-\beta^i t}$ fixed at $\mu^i = 10/M$, $\beta^{i}=2/M$ and
$\alpha^{i,j}=2/M$ and the initial reserve value is set at $X_0=0$.

\begin{table}[H]
\caption{Parameters corresponding to the various scenarios of the realizations of $(X_t^i, i=1,...,10)$.}\label{tabpar}
\begin{center}
\begin{tabular}{c||c|c|c|c}
Scenario&$a$&$\sigma$&$c$&$\rho$\\\hline\hline\vspace{-0.1cm}
No lending, independent BMs &0&1&0&0.2\\\vspace{-0.1cm}
Lending, independent BMs&10&1&0&0\\\vspace{-0.1cm}
No lending, correlated BMs&0&1&0&0.2\\\vspace{-0.1cm}
Lending and correlated BMs&10&1&0&0.2\\\vspace{-0.1cm}
Lending, correlated BMs and Poisson jumps & 10&1&0.2&0.2\\\vspace{-0.1cm}
Lending, correlated BMs and Hawkes jumps &10&1&0.2&0.2
\end{tabular}
\end{center}
\end{table}
\vspace{-1cm}

\begin{figure}[H]
\begin{center}
  \caption{One realization of $(X_t^i, i=1,...,10)$, $t=1,...100$ with no lending and independent Brownian motions (left), lending and correlated Brownian motions (center) and lending, correlated Brownian motions and the Hawkes distributed jump (with the jump times shown as dots) (right).}\label{fig0}
    \includegraphics[width=0.32\textwidth]{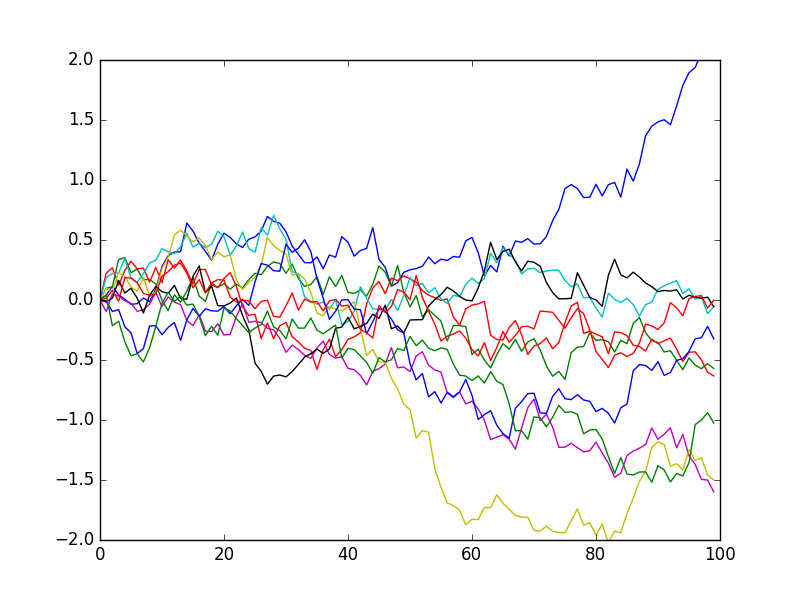}
    \includegraphics[width=0.32\textwidth]{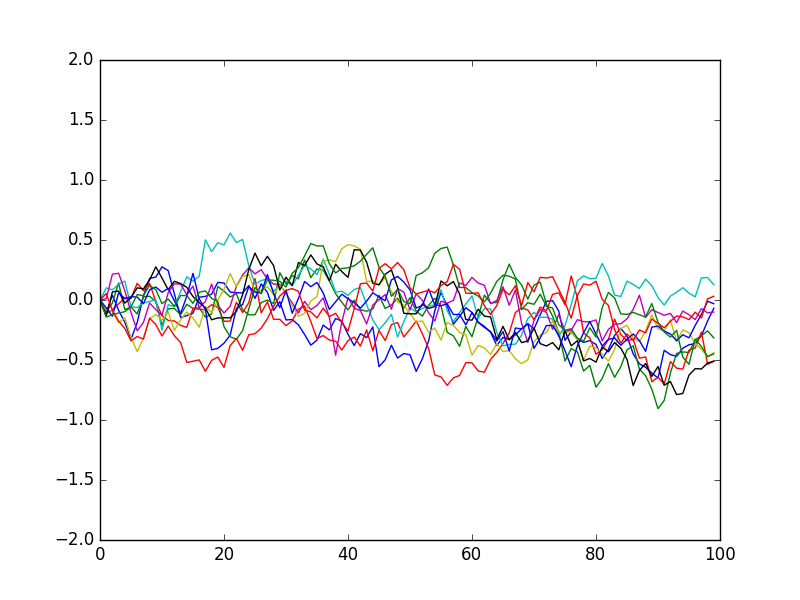}
    \includegraphics[width=0.32\textwidth]{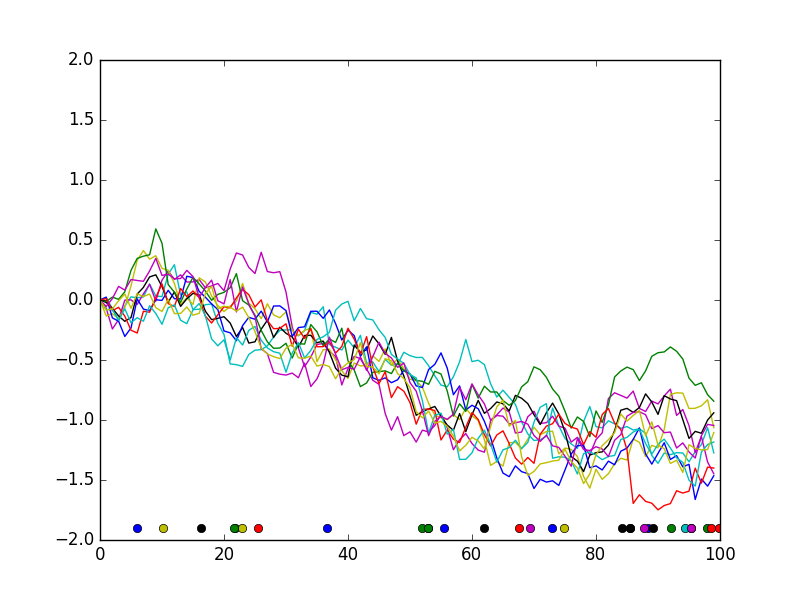}
    \end{center}
\end{figure}

We consider several scenarios of the monetary reserve process denoted in Table \ref{tabpar}. In Figure \ref{fig0} we see that the trajectories generated by the correlated Brownian motions with lending are more grouped than the ones generated by independent Brownian motions without lending. The Hawkes shock, as expected, causes much more trajectories to reach the default level, due to it being an additional source of default propagation.

Consider the default level $D=-0.7$. In Figure \ref{fig1} we show the distributions of the number
of defaults defined as $P\left(\sum\limits_{i=1}^M\left(\min\limits_{0\leq t\leq T}X_t^i\leq
D\right)=n\right)$, for the independent Brownian motion case, the dependent case and the cases
including a Poisson process and a Hawkes process. We observe that the mean-field interbank lending
causes most of the probability mass to be set around zero defaults, as opposed to the no lending
case when the density function is centered at 5 defaults. However, the lending component also adds
a non-negligible probability of all nodes defaulting at once. The correlation between the Brownian
motions affects the loss distribution only slightly. As expected, adding the self-exciting and
clustering Hawkes process increases the tail-risk even more so that the probability of all nodes
reaching a default state rises significantly.

\begin{figure}[H]
\begin{center}
  \caption{The distribution of the number of defautls in several different scenarios, as explained in Table \ref{tabpar}. The parameters in the Monte Carlo simulated based on a discretized Euler-Maruyama scheme are $M=10$, $T=1$, 10000 simulations and 100 time steps.}\label{fig1}
    \includegraphics[width=0.5\textwidth]{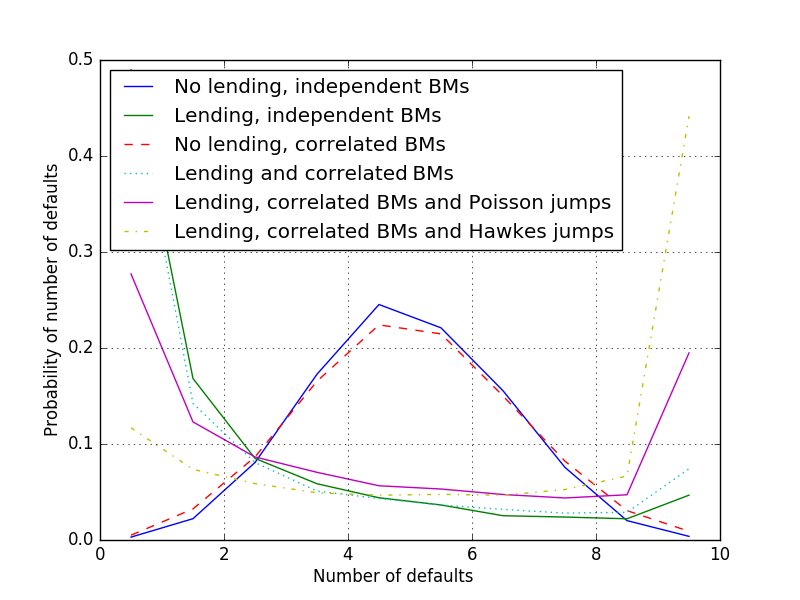}
    \end{center}
\end{figure}

\subsection{Dependency}
As we have already seen in Figure \ref{fig1}, the Hawkes process increases the probability of
multiple defaults occuring at the same time more than an independent Poisson process does. It is
therefore of interest to study the dependence structure between the nodes in more detail. As is
standard in multi-variate statistics, see \citet{poon03}, a tool for assessing the (not necessarily
linear) dependency between variables is the measure $p(q)$ given by
\begin{align}
p(q) = P\left(X^i > F_{X^i}^{-1}(q)|X^j>F_{X^j}^{-1}(q)\right),\qquad i,j\in I_M,
\end{align}
the probability of one of the variables $X^i$ being above the $q$th quantile of its marginal distribution $F_{X^i}$ conditional on the other variable $X^j$ being above its $q$th quantile. To remove the influence of marginal aspects it is typical to transform the data to a common marginal distribution, with e.g. a transformation to unit Fr\'echet marginals (for details we refer to the methodology in \citet{poon03}). In the presence of a dependence between two nodes in our model, the probability of default of one firm conditional of the default of the other will be significant. When computing the systemic risk present in interconnected financial networks, quantifying this dependence is clearly of key importance. Note that in our model we have two key dependencies present:
\begin{itemize}
\item Dependence through the drift term: a high $X_t^1$ results in a change in $X_{s}^1$ and $X_{s}^2$ for $s>t$ due to the interbank loans.
\item Dependence through the Hawkes process: if $\Delta X_t^1<<0$ represents the occurence of a jump at time $t$, then the likelihood of $\Delta X_{s}^1<<0$ and $\Delta X_s^2<<0$ for $s>t$ increases. We remark that the likelihood of seeing the shock decreases with a larger $s$ due to the mean-reverting excitation function $g^{i,j}$ $i,j \in \{1,2\}$.
\end{itemize}
Figure \ref{figscat} shows the scatter plots for both an independent Poisson jump and a Hawkes jump. Already here we see that the Hawkes jump seems to reflect a more strong dependency in the tails.
In Figure \ref{fig3} we plot the measure $p(q)$ (for the left tail) compared to the $1-q$ function representing independence, for several different parameter sets. We see that the Hawkes process shows significantly more dependence between the two nodes for all quantiles compared to the Poisson process. In particular, we note that having only a jump term in the monetary reserve process results in a significant tail probability, where the tail probability of the Hawkes process is considerably higher than that of the Poisson process. This is to be expected since the self-exciting nature of the jumps causes the extreme events in one node to influence extreme events in the other node. Furthermore, incorporating the independent Brownian motion seems to reduce the tail risk almost to zero, while adding the interbank loans in turn causes a slight increase in the tail risk, due to the additional source of default propagation.

\begin{figure}[H]
\begin{center}
  \caption{Scatter plots of $X_t^1$ and $X_t^2$ ($M=2$) showcasing the dependence structure between the nodes in the presence of a Poisson jump (left) and a Hawkes jump (right).}\label{figscat}
  \includegraphics[width=0.3\textwidth]{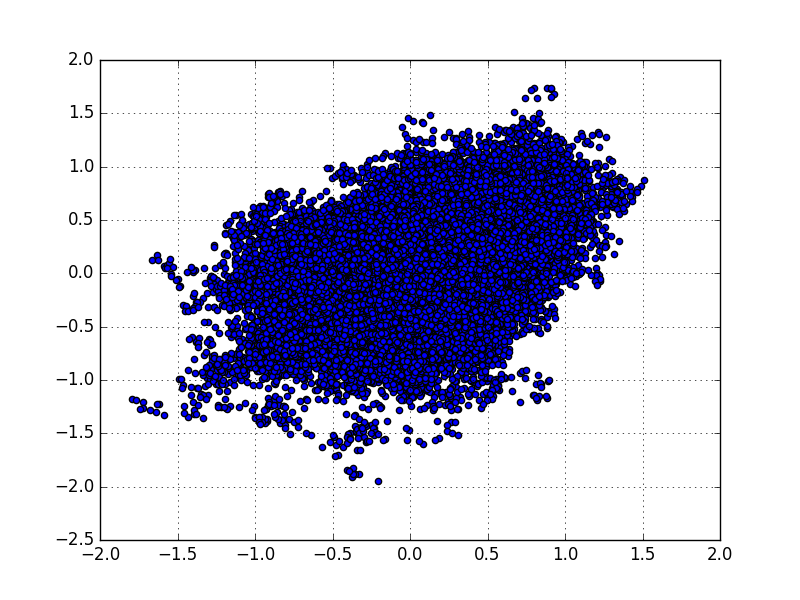}
    \includegraphics[width=0.3\textwidth]{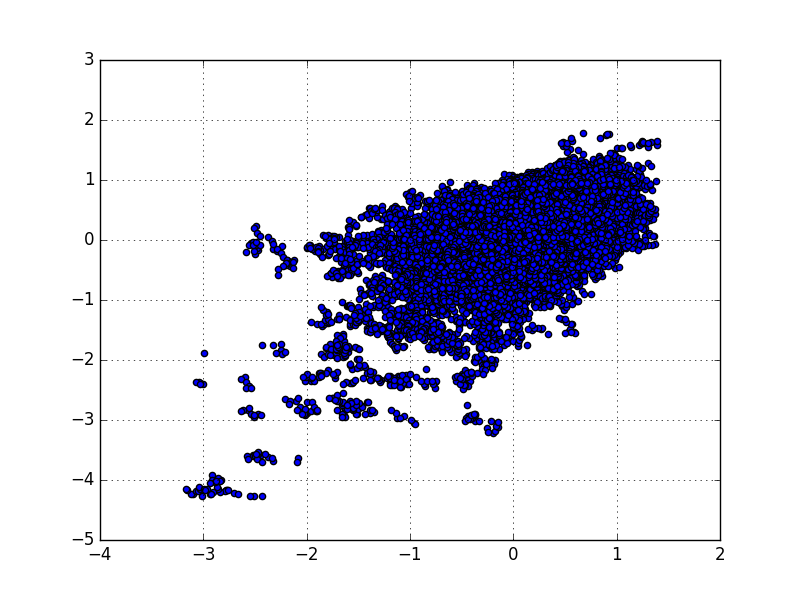}
    \end{center}
\end{figure}
\vspace{-1cm}
\begin{figure}[H]
\begin{center}
  \caption{The measure $p(q)$ quantifying the dependence of $X_t^1$ and $X_t^2$ ($M=2$) with Poisson and Hawkes jumps for the case of no Brownian motion, no interbank lending but only jumps (left, $\sigma = 0$, $a=0$ and $c = -1$), Brownian motion, no lending and jumps (center, $\sigma = 0.1$, $a=0$ and $c=-1$) and Brownian motion, lending and jumps (right, $\sigma = 0.1$, $a=0.5$ and $c=-1$). The other parameters in the Monte Carlo simulation based on a Euler-Maruyama scheme are $T=1$, 500 simulations, 100 time steps, $X_0^i = 0$, $\rho = 0$, with $\mu^i = 0.1$, $\beta^{i}=1.2$, $\alpha^{i,j}=1.2$. }\label{fig3}
  \includegraphics[width=0.3\textwidth]{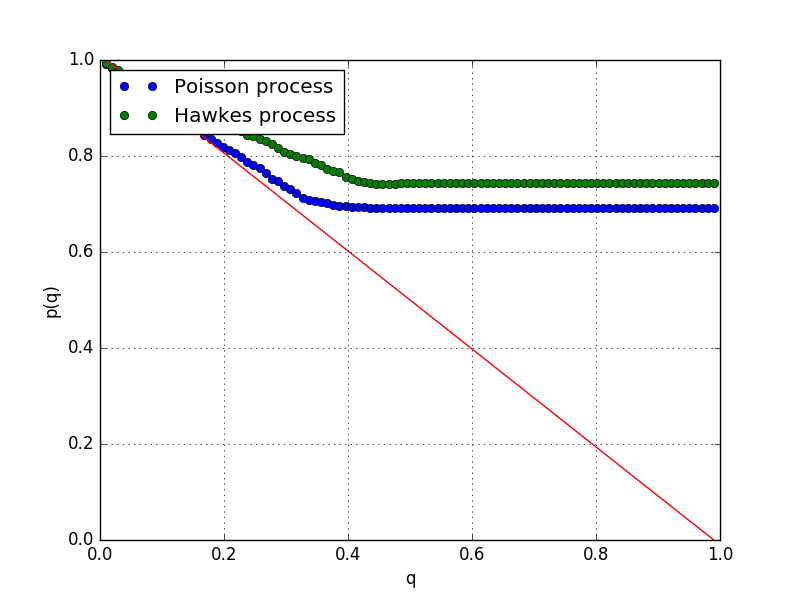}
    \includegraphics[width=0.3\textwidth]{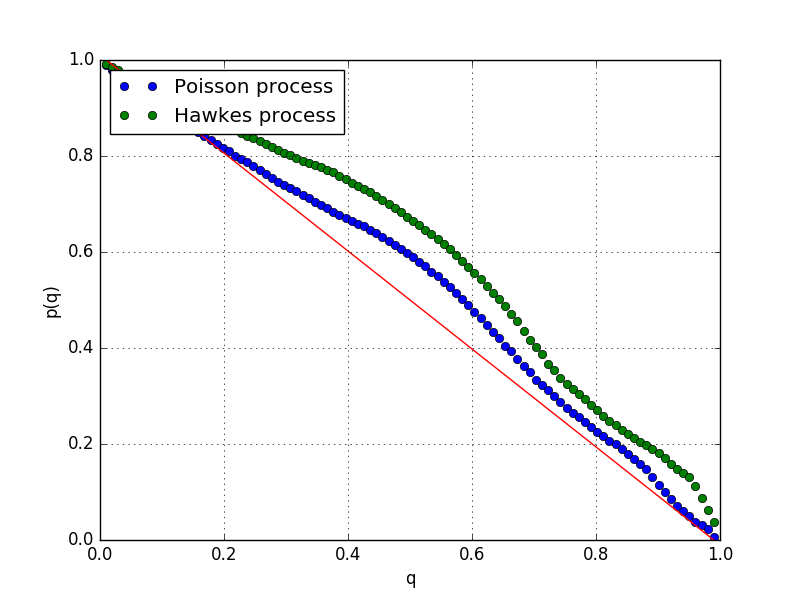}
    \includegraphics[width=0.3\textwidth]{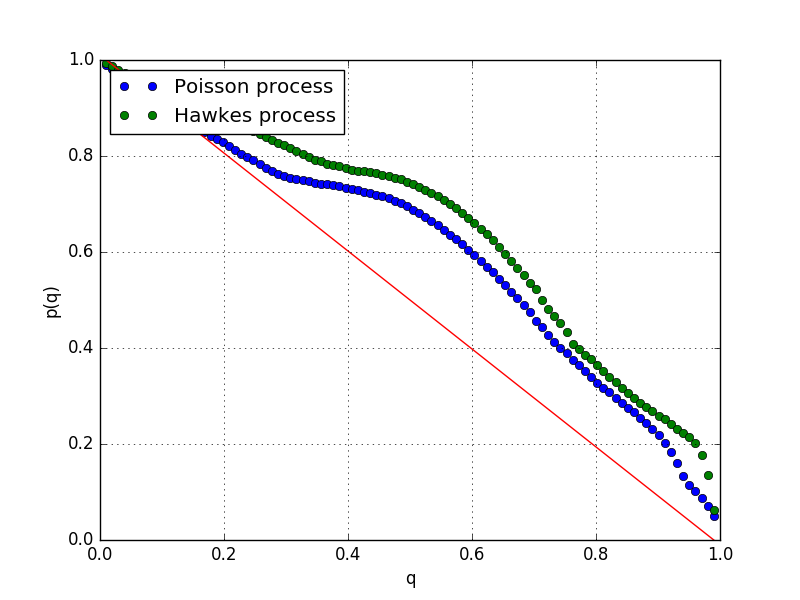}
    \end{center}
\end{figure}

\section{Mean-field limit}\label{sec4}
We derive theoretical mean-field limits for the monetary reserve process with a Hawkes jump term to show the effects of considering this additional type of contagion on the total losses in the network in the case of the number of nodes tending to infinity. Our derivations are based on \citet{carmona13} and \citet{capponi15}. In other words, we want to understand the behavior of the distribution of  the process $X_t=(X_t^i)$, $i\in I_M$ as in \eqref{eq:meanfieldsde} when $M\rightarrow\infty$. % where the Brownian motions are independent \blu{(CHECK with Capponi, but I think yes)}.
Let the vector $(p^i, X_t^i)$ take on values in the space $\mathcal{O}:=(\mathbb{R}_+\times\mathbb{R}_+\times\mathbb{R}_-)\times\mathbb{R}$. Define the sequence of empirical measures as % do i need to use also the intensity in the density to make it a Markov process?
\begin{align}\label{eq:nuemp}
\nu_t^M:=\frac{1}{M}\sum\limits_{i=1}^M\delta_{(p^i,X_t^i)}, \qquad t\geq 0,
\end{align}
on the Borel space $\mathcal{B}(\mathcal{O})$. In other words we keep track of the empirical
distribution of the type, intensity and monetary reserve for all nodes. Let $S=\mathcal{P}(\mathcal{O})$; the
collection of Borel probability measures on $\mathcal{O}$. Then $(\nu^M_t)_{t\geq 0}$ is an
element of the Skorokhod space $D_S[0,\infty)$, %this is ok right [0,\infty)? because it means that \nu_t^M are RCLL processes on [0,\infty) takng values in S, where the [0,\infty) relates to the time, since \nu_t^M is a fucntion of time? my question here is to be sure that it is not in space, because in space the inputs are in \mathbb{R} since X_t can also be negative! unlike Capponi. In any case I think the measure itself is not signed! 
 i.e. it can be viewed as an $S$-valued right-continuous, left-hand
limited stochastic process. For any smooth function $f(p,x)\in C^{\infty}(\mathcal{O})$ defined
for $(p,x)\in \mathcal{O}$  define the integral w.r.t. the measure $\nu$ by
%\blu{(check if this definition is correct... what exactly do we integrate over if we assume all constants? anyways we can easily assume also an initial distribution over $x_0$ so then it would make sense)} Yes is correct since the nu part is a probability so it makes sence
\begin{align}\label{eq:nu1f}
\nu(f):=\int_\mathcal{O}f(p,x)\nu(dp\times dx),
\end{align}
so that
\begin{align}\label{eq:nuf}
\nu_t^M(f) = \frac{1}{M}\sum\limits_{i=1}^Mf(p^i,X_t^i), \qquad t\geq 0.
\end{align}
Then we have $\bar X_t =\nu_t^M(I)$ where $I(x)= x$.

We want to understand the dynamics for $\nu_t^M$  for large $M$. In deriving the limit of the
process $\nu_t^M$ for $M\rightarrow \infty$ we use an argument similar to \citet{capponi15}
and \citet{giesecke13}. In particular, the focus here is on identifying the limiting dynamics, using the result of \citet{delattre16} on the behavior of Hawkes processes in a large system. We identify the limit through the generator of the limiting martingale problem in Section
\ref{sec41}, and subsequently in Section \ref{sec43} we identify the limit process.

\subsection{Weak convergence}\label{sec41}
We want to use the martingale problem to show that $\nu^M_t$ converges to a limiting process. For notational convenience we will write $f(X_t^i):=f(p^i,X_t^i)$. By the definition of a Hawkes process we have that for all $i\neq j$, $(N_t^i)_{t\geq 0}$ and $(N_t^j)_{t\geq 0}$ never jump simultaneously and a jump in one of the processes $dN_t^i$ results in only $X_t^i$ having a jump of size $c^i$. Therefore, applying It\^ o's formula gives
\begin{align}
df(X_t^i) =&
a^i\partial_xf(X_t^i)[\nu_t^M(I)-X_t^i]dt+\frac{1}{2}(\sigma^i)^2\partial_{xx}f(X_t^i)dt+\sigma^i\partial_xf(X_t^i)dW_t^i\\
&+(f(X_{t-}^i+c^i)-f(X_{t-}^i))dN_t^i,
\end{align}
Then we have, using the definition of $\nu_t^M$ in \eqref{eq:nuf},
\begin{align}\label{eq:nuito}
\nu_t^M(f) =&
\nu_0^M(f)+\int_0^t\nu_s^M(\mathcal{L}^1f)\nu_s^M(I)ds-\int_0^t\nu_s^M(\mathcal{L}^2f)ds+\frac{1}{M}\sum\limits_{i=1}^M\int_0^t\sigma^i\partial_xf(X_s^i)dW_s^i\\
&+\int_0^t\nu_s^M(\mathcal{L}^3f)ds+\frac{1}{M}\sum\limits_{i=1}^M\int_0^t\left(f(X_{t-}^i+c^i)-f(X_{t-}^i)\right)dN_s^i,
\end{align}
where we have defined the operators $\mathcal{L}^*$ acting on $f(p^i,X_t^i)$ as
\begin{align}
\mathcal{L}^1f(p,x):=a\partial_xf(p,x), \qquad \mathcal{L}^2f(p,x):=ax\partial_xf(p,x), \qquad
\mathcal{L}^3f(p,x)=\frac{1}{2}\sigma^2\partial_{xx}f(p,x),
\end{align}
so that
\begin{align}
\nu_t^M(\mathcal{L}^1f) = \frac{1}{M}\sum\limits_{i=1}^M
a^i\partial_xf(p^i,X_t^i),\;\;\nu_t^M(\mathcal{L}^2f) = \frac{1}{M}\sum\limits_{i=1}^M
a^iX_t^i\partial_xf(p^i,X_t^i),\;\;\nu_t^M(\mathcal{L}^3f) = \frac{1}{M}\sum\limits_{i=1}^M
\frac{1}{2}(\sigma^i)^2\partial_{xx}f(p^i,X_t^i).
\end{align}
Define for any smooth function $\phi\in C^\infty(\mathbb{R}^N)$ with $N\in\mathbb{N}$ and Borel measure $\nu\in S$
\begin{align}\label{eq:phi}
\Phi(\nu) = \phi(\nu(\bold{f})),
\end{align}
with $\bold{f}=(f_1,...,f_N)$ for $f_n\in C^\infty(\mathcal{O})$, $n=1,...,N$ and
$\nu(\bold{f}):=(\nu(f_1),...,\nu(f_N))\in\mathbb{R}^N$. Let $\mathcal{S}$ be the collection
of bounded measurable functions $\Phi$ on $S$. Then $\mathcal{S}$ separates $S$ and it thus
suffices to show convergence of the martingale problem for those functions. Then, by
applying It\^o's formula to $\phi(\nu_t^M(\bold{f}))$ and using the fact that $d\tilde
N_t^i:=dN_t^i-\lambda_t^idt$ and $dW^i_t$ are martingales and $X_{t-}$ and $\lambda_{t-}$ are
predictable, we find for $0\leq t<u$
\begin{align}
\Phi(\nu_u^M)=\Phi(\nu_t^M)+\int_t^u\left(\mathcal{C}_s^M+\mathcal{D}_s^M+\mathcal{J}_s^M\right)ds+\mathcal{M}_u-\mathcal{M}_t,
\end{align}
where $(\mathcal{M}_t)_{t\geq 0}$ is an initial mean-zero martingale and
\begin{align}
&\mathcal{C}_t^M:=\sum\limits_{n=1}^N\frac{\partial\phi(\nu_t^M(\bold{f}))}{\partial
f_n}\left(\nu_t^M(\mathcal{L}^1f_n)\nu_t^M(I)-\nu_t^M(\mathcal{L}^2f_n)+\nu_t^M(\mathcal{L}^3f_n)\right),\\
&\mathcal{D}_t^M:=\frac{1}{2M^2}\sum\limits_{n,l=1}^N\frac{\partial^2\phi(\nu_t^M(\bold{f}))}{\partial
f_n\partial f_l}\sum\limits_{i=1}^M\left((\sigma^i)^2\frac{\partial
f_n(X_t^i)}{\partial x}\frac{\partial f_l(X_t^i)}{\partial x}\right)\\
&\mathcal{J}_t^M:=\sum\limits_{i=1}^M\left[\phi(\nu_t^M(\bold{f})+J_t^{M,i}(\bold{f}))-\phi(\nu_t^M(\bold{f}))\right]\lambda_t^i,
\end{align}
where $J_t^{M,i}(\bold{f})=(J_t^{M,i}(f_1),...,J_t^{M,i}(f_N))$ and
\begin{align}
J_t^{M,i}(f):=\frac{1}{M}(f(X_{s-}^i+c^i)-f(X_s^i)).
\end{align}
We will need the following result given in Theorem 8 in \citet{delattre16}:
\begin{theorem}[Propagation of chaos result for the Hawkes process]\label{lemmaint}
Consider the Hawkes process in the sense of \eqref{eq:hawkpoiss}. For each $M\geq 1$ consider the
complete graph with nodes $I_{M}$. Let $g:[0,\infty)\rightarrow\mathbb{R}$ be a locally square-integrable function and set $g^{i,j}= M^{-1}g$ for all $i,j\in I_{M}$. Define the limit equation
\begin{align}\label{eq:barn}
\bar N_t = \int_0^t\int_0^\infty \caratt_{\{z\leq \left(\mu_t+\int_0^sg(t-s)d\mathbb{E}[\bar
N_s]\right)\}}\pi(ds,dz),
\end{align}
where $\pi(ds,dz)$ is a Poisson measure on $[0,\infty)\times [0,\infty)$ with intensity measure
$dsdz$. Then we have $d\mathbb{E}[\bar N_t] = \bar \lambda_tdt$ and
\begin{align}\label{eq:delambda}
\bar\lambda_t := \mu + \int_0^tg(t-s)d\mathbb{E}[\bar N_s].
\end{align}
In other words $\bar N=(\bar N_t)_{t\geq 0}$ is an inhomogeneous Poisson process with intensity $\bar \lambda_t$. Let $\bar N_t^i$ be an i.i.d. family of solutions to \eqref{eq:barn} for $i\in I_M$. Define $\Delta_M^i(t)=\int_0^t|d(\bar N_u^i-N_u^i)|$ and $\delta_M(t)=\mathbb{E}[\Delta^i_M(t)]$. Note that this $\delta_M(t)$ does not depend on $i$ due to exchangeability of both $\bar N_t^i$ and $N_t^i$. Then,
\begin{align}
\delta_M(t)=\int_0^t\mathbb{E}\left[\left|\bar\lambda_t-\lambda^i_t\right|\right]ds,
\end{align}
and for $t\in [0,T]$ we have
\begin{align}
\lim\limits_{M\rightarrow\infty}\delta_M(t)=0.
\end{align}
\end{theorem}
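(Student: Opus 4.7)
The plan is to argue by synchronous coupling. Since $\bar\lambda$ defined by \eqref{eq:delambda} is a \emph{deterministic} solution of the linear Volterra equation $\bar\lambda_t=\mu+\int_0^t g(t-s)\bar\lambda_s\,ds$ (whose existence, uniqueness and local boundedness on $[0,T]$ follow from $g\in L^1_{\mathrm{loc}}$ by a Picard iteration), the limit process $\bar N$ is immediately an inhomogeneous Poisson process with intensity $\bar\lambda_t$. To construct the i.i.d. family $\bar N^i$ coupled with $N^i$, I would drive both by the same Poisson measures $\pi^i$ from \eqref{eq:hawkpoiss}:
\begin{align}
\bar N_t^i=\int_0^t\int_0^\infty \caratt_{\{z\leq\bar\lambda_s\}}\pi^i(ds,dz),\qquad N_t^i=\int_0^t\int_0^\infty \caratt_{\{z\leq\lambda_s^i\}}\pi^i(ds,dz).
\end{align}
With this coupling, the jump sets of $N^i$ and $\bar N^i$ disagree exactly when $\pi^i$ has a point whose mark $z$ lies between $\lambda_s^i$ and $\bar\lambda_s$, so $\Delta_M^i(t)$ is itself a counting process with $\mathcal{F}_t$-compensator $\int_0^t|\lambda_s^i-\bar\lambda_s|\,ds$. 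Taking expectations yields the stated identity $\delta_M(t)=\int_0^t\mathbb{E}[|\lambda_s^i-\bar\lambda_s|]\,ds$.

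To show $\delta_M(t)\to 0$, the task reduces to proving $\varphi_M(s):=\mathbb{E}[|\lambda_s^1-\bar\lambda_s|]\to 0$ uniformly on $[0,T]$. I would decompose
\begin{align}
\lambda_t^i-\bar\lambda_t=\frac{1}{M}\sum_{j=1}^M\int_0^t g(t-s)\,d(N_s^j-\bar N_s^j)\;+\;\frac{1}{M}\sum_{j=1}^M\int_0^t g(t-s)\,d\tilde M_s^j,
\end{align}
where $\tilde M_s^j:=\bar N_s^j-\int_0^s\bar\lambda_u\,du$ are i.i.d. mean-zero $\mathcal{F}_t$-martingales. For the first term, exchangeability together with the compensator computation above gives an $L^1$-bound of $\int_0^t|g(t-s)|\varphi_M(s)\,ds$. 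For the second, the It\^o isometry and independence give an $L^2$-bound of order $M^{-1}\int_0^t g(t-s)^2\bar\lambda_s\,ds$, which is $O(M^{-1/2})$ in $L^1$ uniformly on $[0,T]$ thanks to local square-integrability of $g$ and the a priori bound on $\bar\lambda$. Combining these yields
\begin{align}
\varphi_M(t)\leq\int_0^t|g(t-s)|\varphi_M(s)\,ds+\varepsilon_M(t),\qquad \sup_{t\leq T}\varepsilon_M(t)\xrightarrow[M\to\infty]{}0,
\end{align}
and a convolution-type Gronwall inequality (iterating the bound and summing the convolution powers $|g|^{*n}$, which converge on $[0,T]$ by local integrability) gives $\sup_{t\leq T}\varphi_M(t)\to 0$. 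Integrating in $t$ then delivers $\delta_M(t)\to 0$.

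The main obstacle I anticipate is the uniform-in-$t$ control of the LLN remainder $\varepsilon_M$: it rests on an a priori estimate $\sup_{t\leq T}\bar\lambda_t<\infty$ and on the local $L^2$-assumption on $g$, and one must also check that the coupling genuinely produces an i.i.d. family (secured by using the already-independent Poisson measures $\pi^i$ of \eqref{eq:hawkpoiss}). Once these pieces are in place, the Volterra/convolution structure of the resulting inequality prevents a direct application of the classical Gronwall lemma, but the iterated-kernel argument closes the proof in an entirely standard way.
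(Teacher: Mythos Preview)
The paper does not supply its own proof of this theorem: it is quoted verbatim as Theorem~8 of \citet{delattre16}, and all that follows in Section~\ref{sec41} merely \emph{uses} the conclusion \eqref{eq:delattre}. Your synchronous-coupling argument is correct and is in fact the proof strategy of \citet{delattre16}: driving $N^i$ and $\bar N^i$ by the same Poisson measure $\pi^i$, identifying the compensator of $\Delta_M^i$ as $|\lambda^i-\bar\lambda|$, splitting $\lambda^i-\bar\lambda$ into a term controlled by $\delta_M$ (or equivalently $\varphi_M$) via the total-variation bound and an i.i.d.\ fluctuation term of size $M^{-1/2}$ in $L^1$, and closing with a convolution Gronwall lemma on $[0,T]$. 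So there is nothing to compare against in the present paper beyond noting that you have reconstructed the cited external proof.
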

%\begin{proof}
%The statement follows from Theorem 8 in \citet{delattre16}.
%%is a simple extension of the proof of Theorem 8 in \citet{delattre16}. For this, note that
%%\begin{align}
%%\delta_M(t)\leq& \int_0^t\mathbb{E}\left[\left|\int_0^s g(s-u)d\bar\lambda_u-M^{-1}\sum\limits_{i=1}^M\int_0^s g(s-u)d\bar N_u^i\right|\right]ds\\
%%&+\int_0^t\mathbb{E}\left[\left|M^{-1}\sum\limits_{i=1}^M\int_0^s g^{i,j}(s-u)d[\bar N_u^i-N_u^i]\right|\right]ds\\
%%&+\int_0^t\mathbb{E}\left[\int_0^s\left| M^{-1}g(s-u)-g^{i,j}(s-u)\right|d \bar N_u^i\right]ds\\
%%=&A+B+C.
%%\end{align}
%%Using exchangeability due to $\bar N_t^i$ being i.i.d. copies of $\bar N_t$
%%Now we remark that using Lemma 22 in \citet{delattre16}
%%\begin{align}
%%C\leq \sum\limits_{i=1}^M\int_0^t(g^{i,j}(t-s)-M^{-1}g(t-s))m_sds,
%%\end{align}
%%so that if $M\rightarrow\infty$ we have that $C\rightarrow 0$ using $g^{i,j}\rightarrow M^{-1}g$ for $i\rightarrow\infty$. Then, applying the bounds for $A$ and $B$ and Lemma 23.1 from \citet{delattre16} the statement follows.
%\end{proof}
In other words, when all nodes interact in the same way in the limit of the number of nodes going
to infinity, the Hawkes process reduces to an inhomogeoneous Poisson process and we have for any
$i\in I_{M}$ the following limit
\begin{align}\label{eq:delattre}
\lim\limits_{M\rightarrow\infty}\mathbb{E}\left[\int_t^u|\lambda_s^i-\bar\lambda_s|ds\right]=0.
\end{align}

The task is now to find the generator of the limiting martingale problem which we will use to determine the process governing the dynamics of the monetary reserves in the limit, see e.g. Theorem 8.2 Chapter 4 of \citet{ethier86}. For this we will use \eqref{eq:delattre} and define a Taylor-based simplification of $\mathcal{J}_t^M$ as %\blu{(check if reducement to one sum due to indicator is correct)->yes is ok} %see also B.4 in Bo - Bilateral CVA
\begin{align}
\mathcal{\tilde J}_t^M:=\sum\limits_{n=1}^N\frac{\partial\phi(\nu_t^M(\bold{f}))}{\partial x_n}
\left[\frac{1}{M}\sum\limits_{i=1}^M\bar\lambda_t\frac{\partial f_n(X_t^i)}{\partial x}c^i
\right].
\end{align}
Using the triangle inequality we have
\begin{align}
\mathbb{E}&\left[\int_t^u|\mathcal{J}_s^{M}-\mathcal{\tilde J}_s^M|ds\right] \label{eq:ineqs}\\
&\leq
\mathbb{E}\left[\int_t^u\left|\sum\limits_{i=1}^M\left[\phi(\nu_s^M(\bold{f})+J_s^{M,i}(\bold{f}))-\phi(\nu_t^M(\bold{f}))\right]\lambda_s^i-\sum\limits_{i=1}^M\left[\sum\limits_{n=1}^N\frac{\partial\phi(\nu_s^M(\bold{f}))}{\partial
x_n}J_s^{M,i}(\bold{f})\right]\lambda_s^i\right|ds\right]\\
&+\mathbb{E}\left[\int_t^u\left|\sum\limits_{i=1}^M\left[\sum\limits_{n=1}^N\frac{\partial\phi(\nu_s^M(\bold{f}))}{\partial
x_n}J_s^{M,i}(\bold{f})\right]\lambda_s^i-
\sum\limits_{i=1}^M\left[\sum\limits_{n=1}^N\frac{\partial\phi(\nu_s^M(\bold{f}))}{\partial
x_n}\tilde J_s^{M,i}(\bold{f})\right]\lambda_s^i \right|ds\right]\\
&+\mathbb{E}\left[\int_t^u\left|\sum\limits_{i=1}^M\left[\sum\limits_{n=1}^N\frac{\partial\phi(\nu_s^M(\bold{f}))}{\partial
x_n}\tilde
J_s^{M,i}(\bold{f})\right]\lambda_s^i-\sum\limits_{i=1}^M\left[\sum\limits_{n=1}^N\frac{\partial\phi(\nu_s^M(\bold{f}))}{\partial
x_n}\tilde J_s^{M,i}(\bold{f})\right]\bar\lambda_s\right|ds\right].
\end{align}
Applying a Taylor expansion to $f\in C^\infty(\mathcal{O})$ and using the boundedness of its derivatives and the definition $c^i =\hat c^i/M$ , we find
\begin{align}\label{eq:boundJ}
J_t^{M,i}(f)\simeq \tilde J_t^{M,i}(f),
\end{align}
 where $a^M \simeq b^M $ means $\lim\limits_{M\rightarrow\infty}|a^M-b^M|=0$ and
\begin{align}
\tilde J_t^{M,i}(f):=\frac{1}{M}\frac{\partial f(X_t^i)}{\partial x}c^i.
\end{align}
Similarly, using the Taylor expansion of $\phi\in C^\infty(\mathbb{R}^N)$ we have  %Could this hold because in the jump term J we have a 1/M term so as M goes to infinity all second, third etc order terms go to zero much much faster?
\begin{align}\label{eq:boundPhi}
\phi(\nu_t^M(\bold{f})+J_t^{M,i}(\bold{f}))-\phi(\nu_t^M(\bold{f}))\simeq
\sum\limits_{n=1}^N\frac{\partial\phi(\nu_t^M(\bold{f}))}{\partial x_n} J_t^{M,i}(\bold{f}).
\end{align}

Using the finiteness of $\lambda_t^i$ from Proposition \ref{nonexpl}, equations \eqref{eq:boundPhi} and \eqref{eq:boundJ}, the boundedness of the derivatives of $f\in C^\infty(\mathcal{O})$ by their supremum, i.e. $||f||=\sup\limits_{(p,x)\in\mathcal{O}} |f(p,x)|$ and the bounds on the intensity given in \eqref{eq:delattre} we have that %\blu{(triple check this!)}
\begin{align}
\lim\limits_{M\rightarrow\infty}\mathbb{E}\left[\int_t^u|\mathcal{J}_s^{M}-\mathcal{\tilde
J}_s^M|ds\right] =0.
\end{align}
Similarly we have %\blu{(Why?! This part is important since it makes the whole expected value in limiting process valid!!)} Simply due to the 1/M^2 term and the fact that the derivatives are all bounded by the supremum. In case with additional X_t we need to check that X_t is also finite!
\begin{align}
\lim\limits_{M\rightarrow\infty}\mathbb{E}\left[\int_t^u|\mathcal{D}_s^M|ds\right]=0.
\end{align}
Define the operator $\mathcal{A}$ acting on the function $\Phi(\nu)$ defined in \eqref{eq:phi}, as
\begin{align}\label{eq:generator}
\mathcal{A}\Phi(\nu):=\sum\limits_{n=1}^N\frac{\partial\phi(\nu_t^M(\bold{f}))}{\partial
f_n}\left(\nu_t^M(\mathcal{L}^1f_n)\nu_t^M(I)-\nu_t^M(\mathcal{L}^2f_n)+\nu_t^M(\mathcal{L}^3f_n)+\nu_t^M(\mathcal{L}^4f_n)\right),
\end{align}
where $\mathcal{L}^4:=c\bar\lambda_t\partial_x.$ Then 
%\begin{align}
%\Phi(\nu_{t_{m+1}}^M)-\Phi(\nu_{t_m}^M)-\int_{t_m}^{t_{m+1}}\mathcal{A}\Phi(\nu_u^M)du,
%\end{align}
%is a martingale with initial mean zero and 
we have the following result:
\begin{lemma}[Limiting martingale problem]\label{lemma1} For any $\Phi\in\mathcal{S}$ and $0\leq t_1\leq ...\leq t_{m+1}\leq\infty$, with $m\in\mathbb{N}$ and $\Psi_j\in L^{\infty}(S)$ we have that $\mathcal{A}$ is the generator of the limiting martingale problem, i.e.
\begin{align}\label{eq:limmart}
\lim\limits_{M\rightarrow\infty}\mathbb{E}\left[\left(\Phi(\nu_{t_{m+1}}^M)-\Phi(\nu_{t_m}^M)-\int_{t_m}^{t_{m+1}}\mathcal{A}\Phi(\nu_u^M)du\right)\prod_{j=1}^m\Psi_j(\nu_{t_j}^M)\right]=0.
\end{align}
\end{lemma}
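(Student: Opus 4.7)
The plan is to start from the Itô decomposition of $\Phi(\nu_u^M)$ that has already been derived in the text and to show that the discrepancy between its predictable finite-variation part and the candidate generator $\mathcal{A}\Phi$ vanishes in $L^1$ as $M\to\infty$. Because $\prod_{j=1}^m\Psi_j(\nu_{t_j}^M)$ is bounded and $\mathcal{F}_{t_m}$-measurable while $\mathcal{M}_u-\mathcal{M}_{t_m}$ is a mean-zero martingale, conditioning on $\mathcal{F}_{t_m}$ annihilates the martingale contribution. It therefore suffices to establish
\begin{align}
\lim_{M\to\infty}\mathbb{E}\left[\int_{t_m}^{t_{m+1}}\left|\mathcal{C}_s^M+\mathcal{D}_s^M+\mathcal{J}_s^M-\mathcal{A}\Phi(\nu_s^M)\right|ds\right]=0,
\end{align}
after which H\"older against the bounded factor $\prod_j\Psi_j$ yields the stated conclusion.

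Next, observe that by construction $\mathcal{C}_s^M$ coincides exactly with the $\mathcal{L}^1,\mathcal{L}^2,\mathcal{L}^3$ contribution to $\mathcal{A}\Phi(\nu_s^M)$, so these three pieces cancel. The diffusion--variance term $\mathcal{D}_s^M$ carries a prefactor $1/M$ from It\^o's product rule; using the uniform bounds on $\sigma^i$ from Assumption \ref{ass1}, on $\partial_x f_n$ from the assumption $f_n\in C^\infty$ with bounded derivatives, and on the Hessian of $\phi$, one obtains $|\mathcal{D}_s^M|\leq C/M$, so its time integral vanishes deterministically, as noted already in the text.

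The heart of the argument is the jump contribution. The plan is to use the three-term triangle split \eqref{eq:ineqs} to compare $\mathcal{J}_s^M$ with the target $\sum_n(\partial\phi/\partial f_n)\,\nu_s^M(\mathcal{L}^4 f_n)$. The first two terms are Taylor remainders: expanding $\phi$ near $\nu_s^M(f_1,\ldots,f_N)$ as in \eqref{eq:boundPhi} produces a second-order remainder of size $O(\|J_s^{M,i}\|^2)=O(1/M^2)$ per node, while expanding $f_n(X+c^i)-f_n(X)\approx c^i\partial_x f_n(X)$ as in \eqref{eq:boundJ} has a remainder of size $O((c^i)^2)=O(1/M^2)$ per node; after summing over $M$ nodes and integrating against $\lambda_s^i$, each yields an $O(1/M)$ contribution. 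The third, decisive, difference replaces $\lambda_s^i$ by $\bar\lambda_s$: using the uniform bounds on $\partial\phi/\partial f_n$, $\partial_x f_n$ and $Mc^i=\hat c^i$, it reduces to a constant multiple of $\mathbb{E}\int_{t_m}^{t_{m+1}}|\lambda_s^i-\bar\lambda_s|ds$, which vanishes by the propagation-of-chaos result \eqref{eq:delattre} in Theorem \ref{lemmaint}.

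The main obstacle will be to secure the uniform-in-$M$ integrability of the intensities $\lambda_s^i$ needed to upgrade each of the three Taylor and propagation-of-chaos bounds into a genuine $L^1$ statement, and in particular to pass the limit through both the time integral and the expectation. I would extract this from Proposition \ref{nonexpl}: under Assumption \ref{ass1} one has $g^{i,j}=g/M$ with $g$ locally integrable, so the spectral radius of the matrix $(\int_0^\infty|g^{i,j}|)_{i,j}$ tends to zero and $\sup_i\sup_{s\in[t_m,t_{m+1}]}\mathbb{E}[(\lambda_s^i)^2]$ is bounded uniformly in $M$ for $M$ large. With this uniform bound in hand, dominated convergence closes each of the three estimates, showing that the integrand converges to $0$ in $L^1$ and hence that $\mathcal{A}$ is indeed the generator of the limiting martingale problem.
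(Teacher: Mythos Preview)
Your proposal is correct and follows essentially the same approach as the paper: the argument is precisely the one laid out in the text immediately preceding the lemma, namely the It\^o decomposition of $\Phi(\nu^M)$, the observation that $\mathcal{C}^M$ matches the $\mathcal{L}^1,\mathcal{L}^2,\mathcal{L}^3$ part of $\mathcal{A}\Phi$, the $O(1/M)$ bound on $\mathcal{D}^M$, and the three-term triangle split \eqref{eq:ineqs} for $\mathcal{J}^M-\tilde{\mathcal{J}}^M$ handled via the Taylor estimates \eqref{eq:boundJ}--\eqref{eq:boundPhi}, Proposition~\ref{nonexpl}, and the propagation-of-chaos limit \eqref{eq:delattre}. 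Your treatment of the uniform-in-$M$ integrability of $\lambda^i$ via the spectral-radius condition is a slight elaboration on what the paper invokes from Proposition~\ref{nonexpl}, but the overall route is identical.
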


\subsection{Limiting process}\label{sec43}
Given the limiting martingale problem \eqref{eq:limmart} and assuming the existence and uniqueness of a limit point, we want to find the limiting process $\nu_t$ that satisfies equation \eqref{eq:limmart}.  Let $\bold{p}=(p^*,x)$. Define the following measure-valued process by %\blu{(I think we can easily include a distribution over initial condition here)}
\begin{align}\label{eq:nu}
\nu_t(A):=\mathbb{P}(X_t(\bold{p})\in A),
\end{align}
where $A\in \mathcal{B}(\mathbb{R})$ and the underlying limiting state process
$X(\bold{p})=(X_t(\bold{p}))_{t\geq 0}$ is a diffusion with time-varying coefficients given by
\begin{align}\label{eq:xxx}
 X_t(\bold{p})=x+\int_0^t\left(a\left(Q_1(s)-X_s(\bold{p})\right)+c\bar\lambda_s\right)ds+\sigma \int_0^tdW_s, \qquad t\geq 0,
\end{align}
with $\bar\lambda_t$ is defined in \eqref{eq:delambda} and
\begin{align}\label{eq:Qexpr}
 Q_1(t)=x+c\int_0^t\bar\lambda_sds.
\end{align}
Notice that $Q_1(t)$ satisfies the integral equation %{\bf [Do we use this?]}--> Yes, in Lemma 3.9 in [5]
\begin{align}\label{eq:Qdefine}
  Q_1(t)=e^{-at}\left(x+\int_0^t e^{as}\left(aQ_1(s)+c\bar\lambda_s\right)ds\right).
\end{align}
Using the definition of $\nu$ in \eqref{eq:nu} we have that
\begin{align}
\nu_t(I) = \int_\mathcal{O}x\nu_t(dx) = \mathbb{E}\left[X_t(\bold{p})\right],
\end{align}
where the underlying state process $X_{t}(\bold{p})$ is given by \eqref{eq:xxx}.
Notice that
\begin{align}
\mathbb{E}\left[X_t(\bold{p})\right]=e^{-at}\left(x+\int_0^te^{as}(aQ_1(s)+c\bar \lambda_s)\right)ds,
\end{align}
from which it follows that
\begin{align}\label{eq:defnuq}
Q_1(t)=\nu_t(I),
\end{align}
 where $I(x)= x$. We now prove that $\delta_\nu$ indeed satisfies the martingale problem in Lemma \ref{lemma1}:
\begin{theorem}[Limiting process]\label{theoremLimPro}
The empirical measure-valued process $\nu^M$ admits the weak convergence $\nu^M\rightarrow \nu$, as $M\rightarrow \infty$, where $\nu$ is defined as in \eqref{eq:nu}. 
%The empirical measure-valued process $\nu^M$ admits the weak convergence $\nu^M\rightarrow \nu$, as $M\rightarrow \infty$, where $\nu$ is defined as in \eqref{eq:nu}. 
Furthermore, $\nu^M(I)\rightarrow Q_1$.
\end{theorem}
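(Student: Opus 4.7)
The plan is to follow the classical three-step strategy for weak convergence of empirical measure-valued processes. First, I would establish tightness of $\{\nu^M\}_{M\ge 1}$ in the Skorokhod space $D_S[0,\infty)$. Second, I would invoke Lemma \ref{lemma1} to conclude that any subsequential weak limit solves the martingale problem associated with $\mathcal{A}$. Third, I would prove uniqueness for that martingale problem by directly verifying that the candidate $\nu$ defined in \eqref{eq:nu} is a solution and that no other one exists, which upgrades subsequential convergence to full weak convergence $\nu^M\Rightarrow\nu$.

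For tightness I would apply Jakubowski's criterion, which reduces the problem to showing tightness of the real-valued processes $\{\nu_t^M(f)\}_{M\ge 1}$ in $D_{\mathbb{R}}[0,\infty)$ for every $f$ in a separating family of smooth bounded functions on $\mathcal{O}$. The semimartingale decomposition \eqref{eq:nuito} splits $\nu_t^M(f)$ into a bounded finite-variation drift (controlled via Assumption \ref{ass1} and the $L^2$-bound on the Hawkes intensities from Proposition \ref{nonexpl}), a Brownian martingale whose predictable quadratic variation is $O(1/M)$, and a compensated jump martingale whose individual jumps have size $|c^i|=|\hat c^i|/M=O(1/M)$. Aldous' criterion together with the convergence $\phi_0^M\to\delta_x$ from Assumption \ref{ass1} then yields both tightness of increments and a tight family of initial conditions.

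To identify the limit, I would verify directly that $\nu$ defined in \eqref{eq:nu} solves the martingale problem. Applying It\^o's formula to $f(X_t(\bold{p}))$ along the SDE \eqref{eq:xxx} and taking expectations (the Brownian martingale vanishes) gives, using $Q_1(t)=\nu_t(I)$ from \eqref{eq:defnuq}, the evolution
\begin{align}
\frac{d}{dt}\nu_t(f)=\nu_t(\mathcal{L}^1f)\,\nu_t(I)-\nu_t(\mathcal{L}^2f)+\nu_t(\mathcal{L}^3f)+\nu_t(\mathcal{L}^4f).
\end{align}
The chain rule applied to $\Phi(\nu_t)=\phi(\nu_t(\bold{f}))$ then shows $\tfrac{d}{dt}\Phi(\nu_t)=\mathcal{A}\Phi(\nu_t)$, so $\Phi(\nu_t)-\Phi(\nu_0)-\int_0^t\mathcal{A}\Phi(\nu_u)du\equiv 0$ and $\nu$ trivially satisfies the martingale problem. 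Uniqueness follows from the McKean--Vlasov structure: the candidate dynamics depend on $\nu$ only through its first moment $Q_1(t)=\nu_t(I)$, and $Q_1$ is pinned down by the linear integral equation \eqref{eq:Qdefine}, which admits a unique solution by a Gr\"onwall argument. Once $Q_1$ is fixed, \eqref{eq:xxx} is a linear SDE with bounded time-dependent coefficients whose one-dimensional marginals are uniquely determined, so every subsequential limit agrees with $\nu$.

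For the second assertion I would choose $f(x)=x$ (first truncated, then passed to the limit using uniform integrability obtained from a Gr\"onwall-type bound on $\mathbb{E}[(X_t^i)^2]$ which is uniform in $M$) to deduce $\nu_t^M(I)=\bar X_t\to\nu_t(I)=Q_1(t)$. The main obstacle is the jump-term contribution in the martingale problem: the Hawkes intensities $\lambda_t^i$ are coupled across $i$ through the kernel $g^{i,j}/M$, so a clean identification of the limit is impossible without first replacing each $\lambda_t^i$ by the deterministic mean-field intensity $\bar\lambda_t$. This is exactly where the propagation-of-chaos estimate \eqref{eq:delattre} from Theorem \ref{lemmaint} becomes indispensable, and the $1/M$ scaling of both $c^i$ and $g^{i,j}$ in Assumption \ref{ass1} is what ensures the Taylor approximations \eqref{eq:boundJ}--\eqref{eq:boundPhi} collapse into the term $\mathcal{L}^4$ in the limit rather than producing a stochastic jump contribution.
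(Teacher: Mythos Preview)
Your proposal is correct and follows essentially the same route as the paper: tightness plus identification of any limit point as a solution to the martingale problem generated by $\mathcal{A}$, followed by verification that the candidate $\nu$ from \eqref{eq:nu} solves it via It\^o's formula applied to $f(X_t(\bold{p}))$, the relation $Q_1(t)=\nu_t(I)$, and the chain rule for $\Phi$. The only notable difference is in packaging: for tightness the paper invokes the Ethier--Kurtz compact-containment condition together with a Kurtz-type regularity estimate on $h^2(\nu_{t+u}^M(f),\nu_t^M(f))$ (Lemmas \ref{rc1} and \ref{rc2} in the appendix), whereas you phrase it through Jakubowski's criterion and Aldous' condition; and for uniqueness the paper simply defers to the argument in \citet{capponi15}, while you spell out the McKean--Vlasov decoupling via the scalar integral equation \eqref{eq:Qdefine} for $Q_1$.
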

\begin{proof}
Using the standard analysis of weak convergence as in Chapter 3 of \citet{ethier86}, the weak convergence $\nu^M\rightarrow \nu$ as $M\rightarrow \infty$ follows from Lemma \ref{lemma1} and Lemmas \ref{rc1}, \ref{rc2} and uniqueness of the limit point. 
In other words, if we define $\mathbb{Q}^M:=\mathbb{P}(\nu^M\in \mathcal{B}(D_S[0,\infty)))$, %note that D_S[0,\infty) is the Skorokhod space of right continuous functions from [0,\infty) to S
we have that $\mathbb{Q}^M$
converges to the solution $\mathbb{Q}$ of the martingale problem generated by $\mathcal{A}$ in
\eqref{eq:generator}. Next we show that the we have $\mathbb{Q}=\delta_\nu$, i.e. the limit
measure-valued process $\nu$ can indeed be represented as in \eqref{eq:nu}.  We have for $f\in
C^\infty(\mathcal{O})$ using the definition in \eqref{eq:nu1f} that
\begin{align}\label{eq:defnu}
\nu_t(f)=\mathbb{E}[f(X_t(\bold{p}))].
\end{align}
On the other hand, from \eqref{eq:xxx} and using It\^o's lemma, we have
\begin{align}
f(X_t(\bold{p}))=&f(x)+\int_0^t\frac{\partial f}{\partial
x}(X_s(\bold{p}))(aQ_1(s)-aX_s(\bold{p})+c\bar\lambda_s)ds+\frac{\sigma^2}{2}\int_0^t\frac{\partial
^2f}{\partial x^2}(X_s(\bold{p}))ds\\ &+\sigma\int_0^t\frac{\partial f}{\partial
x}(X_s(\bold{p}))dW_s.
\end{align}
Then recalling the definition of the operators $\mathcal{L}^*$ and the equality $Q_1(t)=\nu_t(I)$ from \eqref{eq:defnuq} we have %\blu{(check if ok to take lambda in expected value! Yes because it is deterministic right?! only time dependent)}
\begin{align}
\frac{\partial}{\partial t}\mathbb{E}[f(X_t(\bold{p}))]=&\frac{1}{2}\mathbb{E}\left[\sigma^2\partial_{xx}f(X_t(\bold{p}))\right]+Q_1(t)\mathbb{E}[a\partial_xf(X_t(\bold{p}))]+\mathbb{E}[c\bar\lambda_t\partial_xf(X_t(\bold{p}))]\\
&-\mathbb{E}[aX_s(\bold{p})\partial_xf(X_t(\bold{p}))]\\
=&\mathbb{E}[\mathcal{L}^3f(X_t(\bold{p}))]+\nu_t(I)\mathbb{E}[\mathcal{L}^1f(X_t(\bold{p}))]+\mathbb{E}[\mathcal{L}^4f(X_t(\bold{p}))]-\mathbb{E}[\mathcal{L}^2f(X_t(\bold{p}))].
\end{align}
% Note: expected value is taken over the Ito integral, and this is zero! Then after expeced value is taken we take the derivatives
So that, using \eqref{eq:defnu} we find
\begin{align}
\frac{d\Phi(\nu_t)}{dt}&=\sum\limits_{n=1}^N\frac{\partial\phi}{\partial
x_n}(\nu_t(\bold{f}))\frac{d\nu_t(f_n)}{dt}\\ &=\sum\limits_{n=1}^N\frac{\partial\phi}{\partial
x_n}\left(\nu_t(\mathcal{L}^3f)+\nu_t(\mathcal{L}^1f)\nu_t(I)+\nu_t(\mathcal{L}^4f)-\nu_t(\mathcal{L}^2f)\right)\\
&=\mathcal{A}\Phi(\nu_t).
\end{align}
So that for all functions $\Phi(\cdot)$ of the form \eqref{eq:phi} we have
\begin{align}
\Phi(\nu_t)=\Phi(\nu_s)+\int_s^t\mathcal{A}\Phi(\nu_u)du,\qquad 0\leq s<t<\infty,
\end{align}
and hence $\delta_\nu$ satisfies the martingale problem generated by $\mathcal{A}$.
\end{proof}
In other words, the propagation of chaos result from Theorem \ref{theoremLimPro} tells us that the empirical mean $\nu^M$ converges to a measure $\nu$ whose underlying process $X_t(\bold{p})$ reflects the Hawkes process through a \emph{time-dependent} drift.

\subsection{Extensions of the model}
In this section we shortly present results for several possible extensions of results presented in Section \ref{sec4}. In particular we derive the limiting empirical distribution when including a compound Hawkes process in the monetary reserve model considered in \eqref{eq:meanfieldsde}; a systematic risk factor, where the derivation is based on the result from \citet{giesecke15}; and furthermore prove a central limit theorem based on \citet{spiliopoulos14} which quantifies the fluctuation of the empirical distribution around its large system limit. 

\subsubsection{Compound Hawkes process}
If we include a compound Hawkes process in the initial log-monetary reserve SDE, i.e.
\begin{align}
dX_t^i = \frac{a^i}{M}\sum\limits_{k=1}^M(X_t^k-X_t^i)dt + \sigma^idW_t^i + c^idS_t^i,
\end{align}
where
\begin{align}
S_t^i = \sum\limits_{j=1}^{N_t^i}Z_j^i,
\end{align}
where $Z$ is an i.i.d. random variable with distribution function $F$, independent of $N_t^i$ and
$W_t^i$, such that $\lim\limits_{M\rightarrow\infty}\frac{1}{M} \sum\limits_{i=1}^M
\delta_{Z_\cdot^i} = y$. Then the limiting process is given by
\begin{align}
 X_t(\bold{p})=x+\int_0^t\left(a\left(Q_1(s)-X_s(\bold{p})\right)+cy\bar\lambda_s\right)ds+\sigma \int_0^tdW_s, \qquad t\geq 0.
\end{align}

\subsubsection{Systematic risk factor exposure}
Similar to the analysis of \citet{giesecke15} we can show that considering a non-vanishing systematic risk factor common to all the nodes in the system, we obtain a non-deterministic limiting behavior. Let $\mathcal{V}_t =\sigma(V_s,0\leq s\leq t)$ and $\mathcal{F}_t=\sigma((V_s,N_s^i,W_s^i),0\leq s\leq t, i\in\mathbb{N})$. Consider the following model for the log-monetary reserves
\begin{align}
&dX_t^i = a^i(\bar X_t-X_t^i)dt + \sigma^i dW_t^i + c^i dN_t^i + \beta^i dY_t,\label{eq:systrisksde}\\
&dY_t = b_0(Y_t)dt + \sigma_0(Y_t)dV_t,\;\; Y_0=y_0,
\end{align}
where $V_t$ is a standard Brownian motion independent of $W_t^i$ and $N_t^i$. In other words, $W_t^i$ represents a source of risk which is idiosyncratic to a specific name, while $Y_t$ is a systematic risk factor driven by a Brownian motion that is common to all the nodes in the network with the parameter $\beta^i$ representing the sensitivity of node $i$ to the $Y$. The systematic risk factor causes correlated changes in the monetary reserve process and thus acts as an additional source of clustering. As usual assume $p^i:=(a^i,\sigma^i,c^i,\beta^i)\rightarrow p^*:=(a,\sigma,c,\beta)$. Following the derivation in \citet{giesecke15} and defining $\Phi(y,\nu) = \phi_1(y)\phi_2(\nu(\bold{f}))$, and applying It\^ o's lemma as in the derivations for the original model we obtain for $0\leq t<u$
 \begin{align}
 \Phi(Y_u,\nu_u^M) = \Phi(Y_t,\nu_t^M) + \int_t^u(\phi_1(Y_s)\mathcal{C}_s^M+\phi_1(Y_s)\mathcal{D}_s^M+\phi_1(Y_s)\mathcal{J}_s^M + \mathcal{B}^{M,1}_s)ds + \int_t^u\mathcal{B}^{M,2}dV_s + \mathcal{M}_u-\mathcal{M}_t,
 \end{align}
 where we have defined
 \begin{align}
 \mathcal{B}^{M,1}_t := &\phi_1(Y_t)\sum_{n=1}^N\frac{\partial \phi_2(\nu_t^M(\bold{f}))}{\partial f_n}\nu_t^M(\mathcal{L}^5_{Y_t}f_n)+\phi_2(\nu(\bold{f}))\left(b_0(Y_t)\partial_y\phi_1(Y_t)+\frac{1}{2}\sigma_0^2(Y_t)\partial_{yy}\phi_1(Y_t)\right)\\
 &+\partial_y \phi_1(Y_t)\sum_{n=1}^N\frac{\partial \phi_2(\nu(\bold{f}))}{\partial f_n}\sigma_0(y)\nu_t^M(\mathcal{L}^6_{Y_t}f_n)\\
 \mathcal{B}^{M,2}_t:=&\phi_1(Y_t)\sum_{n=1}^N\frac{\partial \phi_2(\nu_t^M(\bold{f}))}{\partial f_n}\nu_t^M(\mathcal{L}^6_{Y_t}f_n)+\sigma_0(Y_t)\partial_y \phi_1(Y_t)\phi_2(\nu(\bold{f})),
 \end{align}
 with $\mathcal{L}^5_yf(p,x):=\beta^ib_0(y)\partial_x f(p,x) + \frac{1}{2}(\beta^i)^2\sigma_0^2(y)\partial_x f(p,x)$ and $\mathcal{L}^6_yf(p,x):=\beta^i\sigma_0(y)\partial_xf(p,x)$. Taking the limit of $M\rightarrow\infty$, using the limits derived in Section \ref{sec41} and the vanishing of the martingale in the limit (see also Lemma 7.2 in \cite{giesecke15}) and defining 
 \begin{align}
 \nu_t(f) = \mathbb{E}[f(X_t(\bold{p})|\mathcal{V}_t],
 \end{align}
  with 
  \begin{align}
  X_t(\bold{p})=x+\int_0^t\left(a\left(\nu_t(I)-X_s(\bold{p})\right)+c\bar\lambda_s\right)ds+\sigma \int_0^tdW_s+\beta\int_0^tdY_s, 
  \end{align}
  we obtain for the limiting process $\nu_t$ the following SPDE %volgens mij hebben al die \phi_1(y) dingen geen effect op de limieten, is in hun paper ook niet zo, 1/M^2 gaat nog steeds naar 0 in D en in J gaat die jump zoals altijd naar de constant poisson ding; verder hebben we hier geen martingale term omdat \nu defined is als de expected value
 \begin{align}
 d\nu_t(f(X_t)) = &\left(\nu_t(\mathcal{L}^1f(X_t))\nu_t(I)-\nu_t(\mathcal{L}^2f(X_t))+\nu_t(\mathcal{L}^3f(X_t))+\nu_t(\mathcal{L}^4f(X_t))+\nu_t(\mathcal{L}^5_{Y_t}f(X_t))\right)dt \\
 &+ \nu_t(\mathcal{L}^6_{Y_t}f(X_t))dV_t,
 \end{align}
 where we use Lemma B.1 and B.2 in \citet{giesecke15} to show that $\mathbb{E}\left[\int_0^tX_sdV_s|\mathcal{V}_t\right]=\int_0^t\mathbb{E}[X_s|\mathcal{V}_s]dV_s$.
The systematic risk factor thus does not vanish in the limit, and results in the stochastic partial differential equation for the limiting process of the empirical measure, instead of the deterministic behavior in the original model. 

\subsubsection{A Central Limit Theorem result}
Consider again the model defined in \eqref{eq:meanfieldsde}. In order to improve the first-order approximation of $\nu_t^M$ given in \eqref{eq:nu}, we can analyze the fluctuations of $\nu^M$ around its large system limit $\nu$. Following \citet{spiliopoulos14} define 
\begin{align}
\Xi_t^M=\sqrt{M}(\nu_t^M-\nu_t).
\end{align}
The signed-measure-valued process $\Xi^M$ weakly converges to the fluctuation limit $\bar\Xi$ in an appropriate space (in particular the convergence is considered in weighted Sobolev spaces in which the sequence $\Xi^M$, $M\in\mathbb{N}$ can be shown to be relatively compact; for discussion on this space, as well as the existence and uniqueness of the limiting point, we refer to Sections 7,8 and 9 in \citet{spiliopoulos14}). We start by deriving an expression for $\Xi_t^M$. Some terms in this expression will vanish in the limit of $M\rightarrow\infty$, and using the tightness of the processes (see Section 8 in \citet{spiliopoulos14}) and continuity of the operators in the expression for $\Xi^M$ we can pass to the limit and find the expression that the limiting fluctuation process satisfies. 

Subtracting $\nu_t$ from $\nu_t^M$ we find
\begin{align}
d\Xi_t^M(f) =& \left(\nu_t^M(\mathcal{L}^1f)\Xi_t^M(I)+\nu_t(I)\Xi_t^M(\mathcal{L}^1f)-\Xi_t^M(\mathcal{L}^2f) +\Xi_t^M(\mathcal{L}^3f)+\Xi_t^M(\mathcal{L}^4f)\right)dt+d\mathcal{M}^M_t(f)\\
&+\sqrt{M}\frac{1}{M}\sum_{i=1}^M(f(X_t^i+c^i)-f(X_t^i))d\tilde N_t -\sqrt{M}\frac{1}{M}\sum_{i=1}^Mc^i\frac{\partial f}{\partial x}\tilde N_t^i\\
&+ \sqrt{M}\left(\frac{1}{M}\sum_{i=1}^M(f(X_t^i+c^i)-f(X_t^i))\lambda_t^i-\nu_t^M(\mathcal{L}^4f)\right)dt,
\end{align}
where the martingale term is defined as
\begin{align}
\mathcal{M}^M_t(f) = \sqrt{M}\left(\frac{1}{M}\sum_{i=1}^M\int_0^t\sigma^i\partial_x f dW_s^i+\int_0^t\frac{1}{M}\sum_{i=1}^Mc^i\frac{\partial f}{\partial x}d\tilde N_s^i\right).
\end{align}
Using the limiting expressions for the Hawkes jump term and a Taylor approximation from Section \ref{sec41}, we have
\begin{align}
&\sqrt{M}\bigg|\frac{1}{M}\sum_{i=1}^M(f(X_t^i+c^i)-f(X_t^i))-\frac{1}{M}\sum_{i=1}^Mc^i\frac{\partial f}{\partial x}\bigg| \leq \frac{K^2}{M\sqrt{M}}\norm{\frac{\partial^2f}{\partial x^2}}.\label{eq:limitingonjump}
%&\lim_{M\rightarrow\infty}\sup_{0\leq t\leq T}\mathbb{E}\left[\int_0^t\sqrt{M}\left(\frac{1}{M}\sum_{i=1}^M(f(X_t^i+c)-f(X_t^i)) -\nu_t^M(\mathcal{L}^4)\right)d\tilde N_t\right]^2=0
\end{align}
Thus one can show by taking the limit $M\rightarrow\infty$, using \eqref{eq:limitingonjump} and Assumption \ref{ass1} that the sequence $\{\Xi_t^M,t\in[0,T]\}_{M\in\mathbb{N}}$ converges in distribution to the limit point $\{\Xi_t\in[0,T]\}$ that satisfies 
\begin{align}
\Xi_t(f) = \Xi_0(f) + \int_0^t\left(\nu_s^M(\mathcal{L}^1f)\Xi_s(I)+\nu_s(I)\Xi_s(\mathcal{L}^1f)-\Xi_s(\mathcal{L}^2f) +\Xi_s(\mathcal{L}^3f)+\Xi_s(\mathcal{L}^4f)\right)ds + \mathcal{M}_t(f),
\end{align}
where $\{\mathcal{M}_t,t\in[0,T]\}$ is the distribution valued, continuous square integrable martingale with a deterministic quadratic variation to which the sequence $\{\mathcal{M}^M_t,t\in[0,T]\}_{M\in\mathbb{N}}$ converges in distribution (note: unlike in the LLN cases, the martingale term does not vanish in the CLT scaling case). By a martingale CLT (see 7.1.4 in \citet{ethier86}) $\mathcal{M}$ is Gaussian. %Not Gaussian right?! I mean we have the Hawkes jump term... How could this be Gaussian...
This implies the following second-order approximation $\nu_t^M\overset{d}\approx \nu_t + \frac{1}{\sqrt{M}}\Xi_t$, giving a more accurate approximation for finite banking systems.  %why dont we use martingale problem for \Xi as usual? we do use the martingale problem, since we derive d\Xi^M; only for some reason we don't need to define the separating space etc... Then why do the the limiting terms hold in expectation, and where do we use expectation for \Xi equation? Because the convergence is in distribution, i.e. $\mathbb{E}[\Xi^M] \rightarrow^d \mathbb{E}[\Xi]$? No, not only that, convergence in distribution states that martingale term might differ. Then why does martingale term not become zero? We just state that \Xi_t-\Xi_0-\int...dt is a martingale, this I agree with and totally satisfies what we had before (they even state that it is a centered Gaussian, so indeed zero mean), taking expectations would result in sth similar to our original martingale problem. BUT then why do the above limiting equations hold in expectation, but we still apparently use them WITHOUT taking expectations?

\section{Systemic risk in a large network}\label{sec5}
In this section we introduce several systemic risk indicators to quantify the risk in our network and to show the particular dependence of the risk on the underlying parameters. We first remark on the difference between the monetary reserve with a Hawkes process and one with an independent Poisson process:
\begin{remark}[Independent Poisson process versus Hawkes process] Consider an independent Poisson process with intensity $\mu$. It is straightforward to see that
\begin{align}
\bar \lambda_t :=\mu + \int_0^t\alpha e^{-\beta(t-s)}\bar\lambda_sds\geq \mu,
\end{align}
since we assume $\alpha, \beta\geq 0$.
 Therefore, for $c<0$ we have that $Q_1(t)\leq \tilde Q_1(t)$, with $Q_1$ and $\tilde Q_1$ being the averages from a Poisson jump with intensity $\lambda_t$ and a jump with intensity $\mu$ respectively. Thus, in the limit $M\rightarrow \infty$, using $\nu^M(I)\rightarrow Q_1(t)$, we have as expected that the Hawkes process increases the default risk in the network.
\end{remark}
\subsection{Risk indicators}
Here we show how one can measure the systemic risk in a large network using the limiting dynamics $X_t(\bold{p})$. We propose to compute systemic risk in the mean-field model based on the fraction of banks that
have transitioned from a normal to a defaulted state. We define the risk indicator as the expected
value of the fraction of banks that throughout time $t\in[0,T]$ have dropped below the default
level $D$,
\begin{align}
 \text{\rm SR}^M:=\frac{1}{M}\sum\limits_{i=1}^M\caratt_{\big\{\min\limits_{0\leq t\leq T}
 X_t^i\leq D\big\}}.
\end{align}
Note that from Theorem \ref{theoremLimPro} we have $\lim\limits_{M\rightarrow\infty}\nu_t^M=\nu_t$ for a continuous function $f$ of $X_t^i$. For the indicator function over $t\in [0,T]$ we consider the approximate relationship to hold 
\begin{align}
\lim\limits_{M\rightarrow\infty}\text{\rm SR}^M&\approx \mathbb{E}\left[\caratt_{\big\{\min\limits_{0\leq
t\leq T}X_t(\bold{p})\leq D\big\}}\right],
\end{align}
in which the average over the indicator function of the $M$ monetary reserve processes is thus replaced by the indicator of the limiting process.

Furthermore, similar to \citet{capponi15} we can define the average distance to default as
\begin{align}
\text{\rm ADD}^M(t):=\mathbb{E}\left[\frac{1}{M}\sum\limits_{i=1}^M X_t^i\right].
\end{align}
Note that $(\nu_t^M;M\in\mathbb{R})$ is uniformly integrable, i.e. for each $t\geq 0$
\begin{align}
\sup\limits_{M\in\mathbb{N}}\mathbb{E}\left[\left|\nu_t^M(I)\right|^2\right]<\infty,
\end{align}
the proof of which is similar to the proof of Lemma \ref{lemmabound} in Appendix \ref{app2} and Lemma B.2 in \citet{capponi15}.
Then for the average distance to default indicator we use the following limiting result
\begin{align}
\lim\limits_{M\rightarrow\infty} \text{\rm ADD}^M(t) = Q_1(t)
\end{align}
with $Q_1(t)$ as in \eqref{eq:defnuq}. 
Note that in the case of independent Poisson jumps with intensity $\lambda$, the limit of the
\text{\rm ADD} indicator is given by $\lim\limits_{M\rightarrow\infty}\text{\rm ADD}^M(t)=
x+c\lambda t$. This is in contrast to the case of the Hawkes jumps for which we have
$\lim\limits_{M\rightarrow\infty}\text{\rm ADD}^M(t) = x+c\int_0^t\bar\lambda_sds$.

\subsection{Numerical results}
We set $M = 300$, i.e. sufficiently large, and analyze how our approximation formulas for the various indicators of systemic risk compare to the corresponding
Monte-Carlo estimate. The latter is obtained by simulating $M$ interacting processes $X_t^i$, $i\in I_M$ using an Euler approximation of \eqref{eq:meanfieldsde}.
\begin{remark}[Computation of $\bar \lambda_t$] Define the partition of $[0,T]$ as $0=t_0<t_1<...<t_K=T$ with $\Delta t:=t_{i}-t_{i-1}$. Then we approximate the integral in \eqref{eq:delambda} as
\begin{align}
\bar\lambda_{t_{i+1}} \approx \bar\lambda_{t_{i}} + \Delta t g(\Delta t)\bar\lambda_{t_{i}},
\end{align}
and $\bar \lambda_0:=\mu$. Using the approximated $\bar\lambda_t$ we compute $Q_1(t)$ as
\begin{align}
Q_1(t_{i+1}) \approx Q_1(t_i) + \Delta t c\bar \lambda_{t_{i}},
\end{align}
where $Q_1(0) = x$.
\end{remark}

\begin{table}[H]
\caption{Monte Carlo estimates versus the LLN approximation for the systemic risk indicators with $\mu=0.01$, $\alpha = 1$, $\beta=1.2$, $a=0.5$, $\sigma=0.5$, $\hat c=-0.2$ and $D=0$.}\label{tab1}
\begin{center}
\begin{tabular}{c|c|c||c|c}
&\multicolumn{2}{c}{Monte Carlo}&\multicolumn{2}{c}{Approximation}\\\hline\hline $x_0$&$\text{\rm
SR}$&$\text{\rm ADD}(T)$&$\text{\rm SR}$&$\text{\rm ADD}(T)$\\\hline\hline
0.002&0.945&0.007&0.949&0.007\\ 0.1&0.821&0.096&0.816&0.096\\ 0.2&0.658&0.197&0.652&0.197\\
0.5&0.252&0.497&0.261&0.497\\ 0.8&0.057&0.797&0.058&0.797\\ 1&0.016&0.998&0.017&0.997\\
\end{tabular}
\end{center}
\end{table}
\begin{table}
\caption{Monte Carlo estimates versus the LLN approximation for the systemic risk indicators with $\mu=0.05$, $\alpha = 1$, $\beta=1.2$, $a=0.5$, $\sigma=0.5$, $\hat c=-0.2$ and $D=0$.}\label{tab2}
\begin{center}
\begin{tabular}{c|c|c||c|c}
&\multicolumn{2}{c}{Monte Carlo}&\multicolumn{2}{c}{Approximation}\\\hline\hline $x_0$&$\text{\rm
SR}$&$\text{\rm ADD}(T)$&$\text{\rm SR}$&$\text{\rm ADD}(T)$\\\hline\hline
0.01&0.947&-0.005&0.946&-0.007\\ 0.1&0.826&0.085&0.830&0.083\\ 0.2&0.669&0.186&0.653&0.183\\
0.5&0.262&0.486&0.269&0.483\\ 0.8&0.061&0.785&0.061&0.783\\ 1&0.017&0.985&0.016&0.0.983\\
\end{tabular}
\end{center}
\end{table}

In Table \ref{tab1} and \ref{tab2} we present the results for our approximation and the
Monte-Carlo estimates for $5000$ simulations, $100$ time steps, $T=1$ and $M=300$. As expected the
systemic risk in the network, as quantified by both $\text{\rm SR}$ and $\text{\rm ADD}$,
decreases as the initial monetary reserve value increases. Furthermore, a higher mean jump intensity
$\mu$ results in a less stable network. In Figure \ref{fig11} we show the LLN estimates for the
systemic risk and the average distance to default for the Hawkes and Poisson process for different values of the initial reserve $x_0$. Our claims
of the Hawkes process adding an additional default risk in the model are verified also in these
numerical results, as the systemic risk indicator for the Hawkes process is considerably larger,
while the average monetary reserves are consistently lower than for an independent Poisson
process. Therefore, the self- and cross-exciting shock modelled through the Hawkes process is an
additional form of contagion in the network, resulting in the network being more prone to a
systemic risk event.

\begin{figure}[H]
\begin{center}
   \caption{LLN estimates for the systemic risk (L) and LLN estimates for the average distance to default (R) at time $T=1$ with $\mu=0.2$, $\alpha = 1.2$, $\beta=1.2$, $a=0.5$, $\sigma=0.5$, $c=-1$ and $D=0$ for a independent Poisson process and the Hawkes process for $x_0\in [0,1]$}
\includegraphics[scale = 0.4]{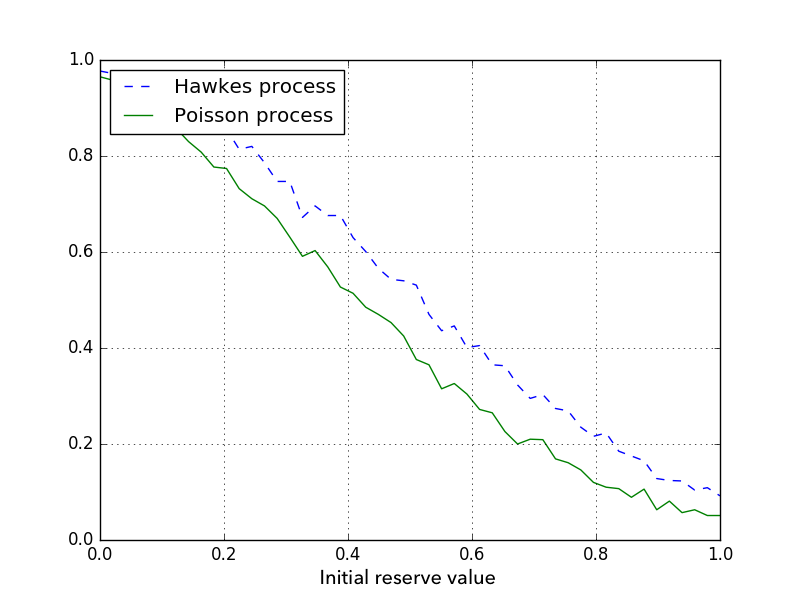}
\includegraphics[scale = 0.4]{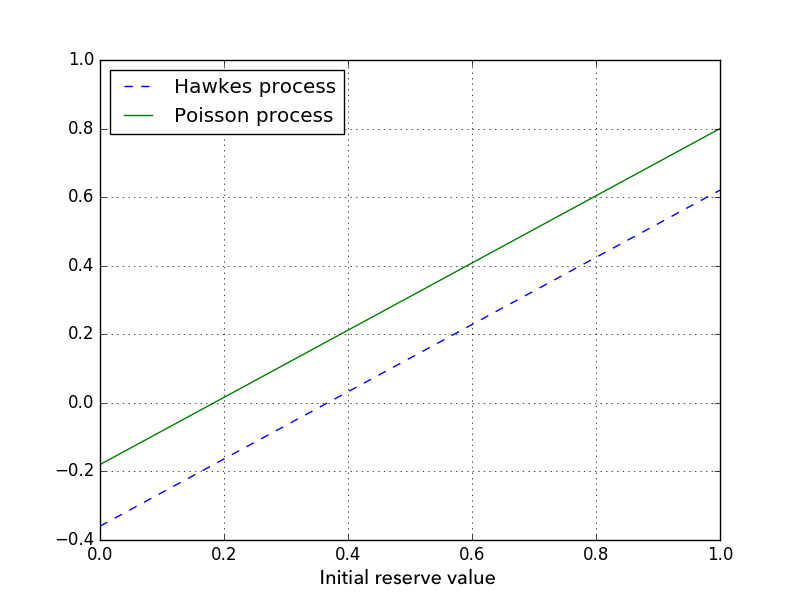}
\end{center}
\label{fig11}
\end{figure}

\subsubsection{Calibrating the model}
Calibration of the model considered in \eqref{eq:meanfieldsde} with heterogeneous coefficients, in particular for the large banking system, is a complex task. In \citet{ait15} the authors considered a calibration for a Hawkes diffusion model used to model asset returns and developed method of moments estimates for the parameters of the model. Even after making simplifying assumptions on the intensity, the model was fitted only on pairs of assets. The calibration of the mean-field SDE with Hawkes jumps for a large number of banks is therefore besides the scope of this paper and left for further research. However, the limiting expression derived in Section \ref{sec43} can be used to derive a simple and efficient way of calibrating the model. In particular, we can calibrate the average distance to default given by $Q_1(t)$ in \eqref{eq:Qexpr} by fitting it to an average of a \emph{sufficiently large} number of assets, resulting in the calibrated parameters $x$, $c$, $\mu$, $\alpha$ and $\beta$. In particular, consider the asset price as a proxy for the monetary reserve process and consider the average of the components of the S\&P500 index over the period of 2008-07-14 until 2008-10-21. Calibrating the deterministic expression for $Q_1(t)$ to the actual average distance to default we obtain the following set of parameters: $\mu = 0.3$, $x = 1300$, $\alpha = 0.07$, $\beta = 0.11$ and $c=-1.6$. It can be argued that the the assumption of regularity of the parameters in the limit (see Assumption \ref{ass1}) is too strong and disenables calibrating to actual excitation. Nevertheless, using this simple and efficient way of calibrating the model, we see from the left-hand side of Figure \ref{figcalib} that contagion is sufficiently captured; in particular note that the Poisson process is unable to model the necessary contagion as seen from the right-hand side of Figure \ref{figcalib}, while the SDE with the Hawkes process provides a much better fit. 

\begin{figure}[H]
\begin{center}
   \caption{Calibrated model for $Q_1$ on the S\&P500 data showing excitation effects (L) and the average of 5000 simulated SDE paths of $X_t(\textbf{p})$ (R)}
\includegraphics[scale = 0.5]{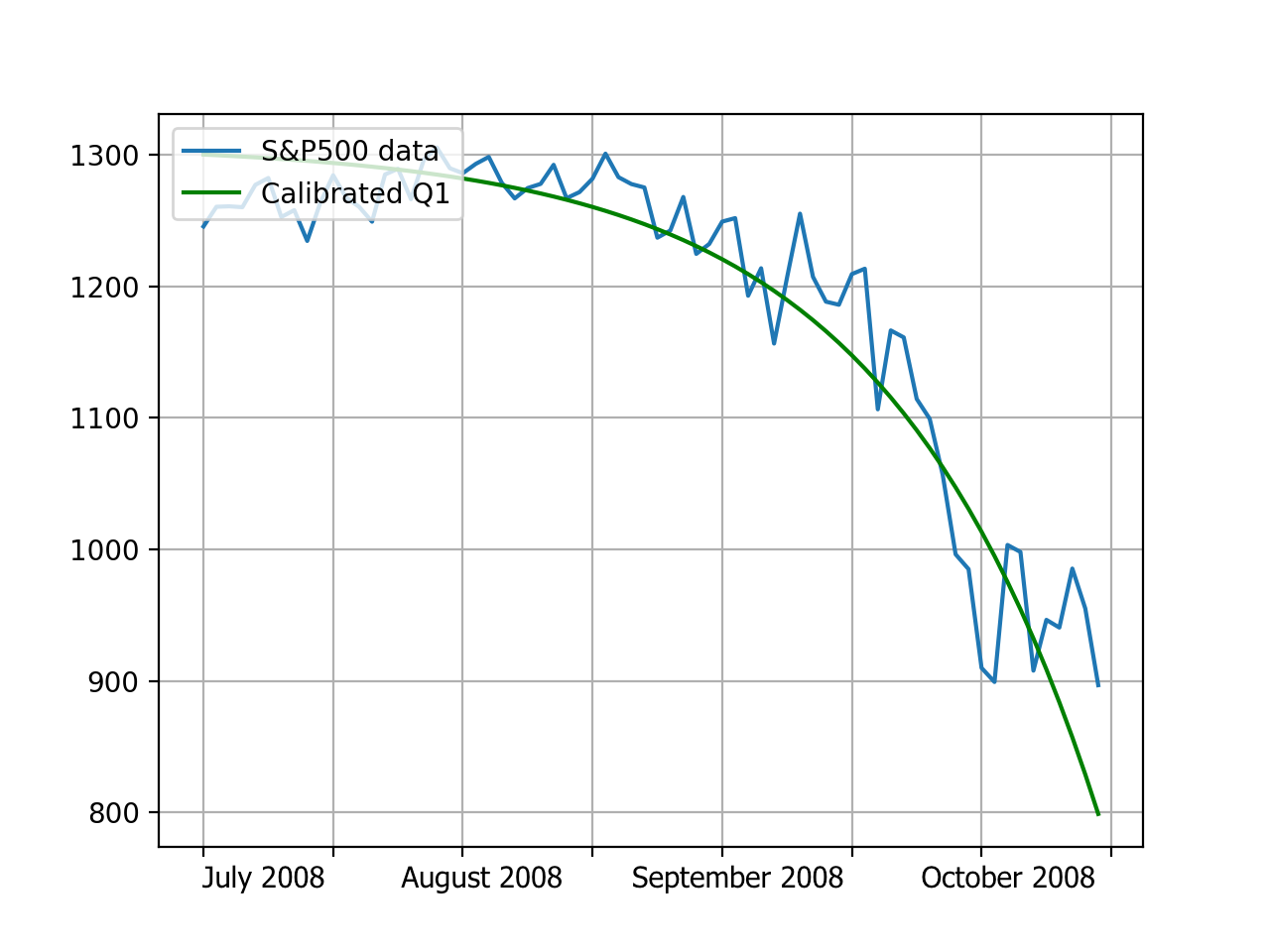}
\includegraphics[scale = 0.5]{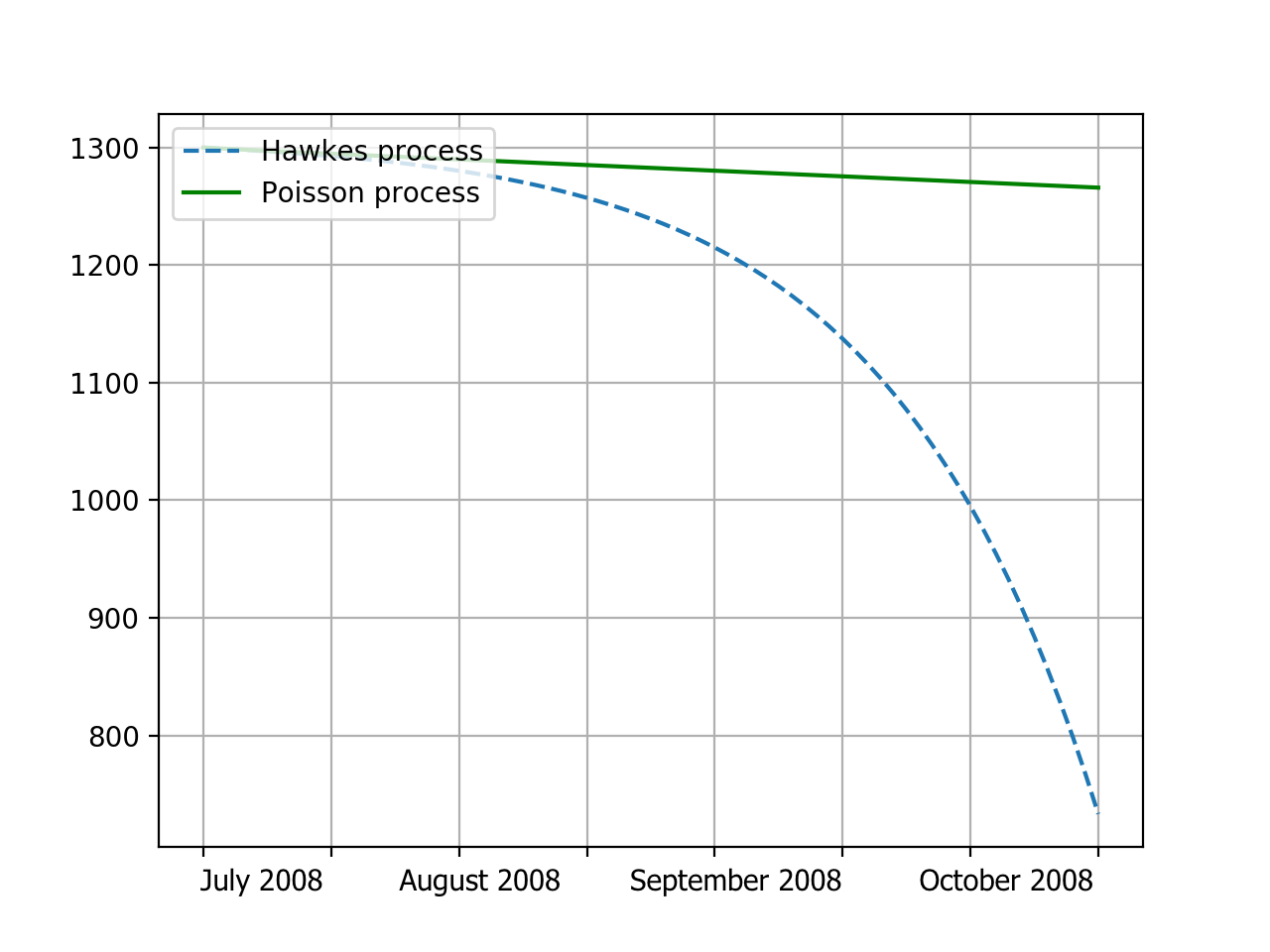}
\label{figcalib}
\end{center}
\end{figure}

\section{Conclusion}
In this paper we have studied the effects of considering an additional self-exciting and clustering shock that impacts the monetary reserve or asset value of the nodes of the interbank system. The nodes are assumed to interact through the drift, and additionally are subjected to a Hawkes-distributed shock. In this way the jump activity varies over time resulting in jump clustering and the shocks propagate through the network in a contagious manner. This allows us to model both default propagation due to interbank loans as well as propagation due to linked balance sheets and financial acceleration. We started with a numerical analysis of the interbank model in which we showed that the Hawkes jumps results in a non-negligible tail-probability of multiple defaults occuring at the same time. We then considered the effects of the Hawkes process in a mean-field interaction model for the monetary reserve process and derived a weak convergence of the empirical mean to a measure whose underlying process reflects the Hawkes process through a time-dependent drift term. Finally we defined several risk indicators and their LLN approximations which can be used for quantifying the risk in large systems and showed that the LLN estimates perform accurately compared to Monte-Carlo simulations. We conclude that the clustering Hawkes jumps result in an additional and important source of default propagation in the network and should not be ignored.

\section*{Acknowledgements}
This research is supported by the European Union in the the context of the H2020 EU Marie Curie Initial Training Network project named WAKEUPCALL.

\appendix

\section{Proofs}\label{app2}
The next Lemma is a boundedness result of the moment estimate of the log-monetary reserve process.
\begin{lemma} \label{lemmabound} For $n=1,2$ %\blu{(do we even need higher $n$'s..? I think no and for 1 and 2 its easy)}
and $T\geq 0$ we have
\begin{align}
\sup\limits_{0\leq t\leq
T,\;M\in\mathbb{N}}\frac{1}{M}\sum\limits_{i=1}^M\mathbb{E}\left[\left|X_t^i\right|^n\right]<+\infty.
\end{align}
\end{lemma}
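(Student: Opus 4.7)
The plan is to reduce to $n=2$ and close an energy estimate on $F^M(t):=\frac{1}{M}\sum_{i=1}^M\mathbb{E}[(X_t^i)^2]$ via an It\^o-Gr\"onwall argument. The case $n=1$ will then follow from $|x|\le 1+x^2$, since this gives $\frac{1}{M}\sum_i\mathbb{E}[|X_t^i|]\le 1+F^M(t)$.

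To set up the estimate, I would apply It\^o's formula for jump processes to $(X_t^i)^2$, using \eqref{eq:meanfieldsde} and the expansion $(x+c^i)^2-x^2 = 2c^i x+(c^i)^2$, and take expectations. The $dW_t^i$ and compensated $d\widetilde N_t^i:=dN_t^i-\lambda_t^i dt$ terms are true martingales once a finite moment bound on $\lambda_t^i$ is known, so after summing over $i$ and dividing by $M$ one obtains an identity of the form
\begin{align*}
\tfrac{d}{dt}F^M(t)=\tfrac{2}{M}\sum_i a^i\mathbb{E}[X_t^i\bar X_t]-\tfrac{2}{M}\sum_i a^i\mathbb{E}[(X_t^i)^2]+\tfrac{1}{M}\sum_i(\sigma^i)^2+\tfrac{2}{M}\sum_i c^i\mathbb{E}[X_t^i\lambda_t^i]+\tfrac{1}{M}\sum_i(c^i)^2\mathbb{E}[\lambda_t^i].
\end{align*}
The drift cross-term is handled by $|\mathbb{E}[X_t^i\bar X_t]|\le \tfrac12\mathbb{E}[(X_t^i)^2]+\tfrac12\mathbb{E}[\bar X_t^2]$ combined with Jensen's inequality $\mathbb{E}[\bar X_t^2]\le F^M(t)$, while the jump cross-term is controlled by Cauchy--Schwarz, $|\mathbb{E}[X_t^i\lambda_t^i]|\le\sqrt{\mathbb{E}[(X_t^i)^2]\mathbb{E}[(\lambda_t^i)^2]}$. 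Using Assumption \ref{ass1} (so that $a^i,\sigma^i,\hat c^i\le C_p$ and the $1/M$ scaling in $c^i=\hat c^i/M$ and $g^{i,j}=g/M$), this yields an inequality of the form
\begin{align*}
\tfrac{d}{dt}F^M(t)\le C_1 F^M(t)+C_2\Bigl(1+\tfrac{1}{M^2}\sum_i\mathbb{E}[(\lambda_t^i)^2]\Bigr)
\end{align*}
with $C_1,C_2$ independent of $M$.

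The main obstacle is to show that $\frac{1}{M}\sum_i\mathbb{E}[\lambda_t^i]$ and $\frac{1}{M}\sum_i\mathbb{E}[(\lambda_t^i)^2]$ are bounded uniformly in $M\in\mathbb{N}$ and $t\in[0,T]$. For this, I would take expectations in the Hawkes intensity equation \eqref{eq:hawkesint}: using $g^{i,j}=g/M$ with $g(t)=\alpha e^{-\beta t}$, one obtains the Volterra equation $\frac{1}{M}\sum_i\mathbb{E}[\lambda_t^i]=\bar\mu+\int_0^t g(t-s)\bigl(\frac{1}{M}\sum_i\mathbb{E}[\lambda_s^i]\bigr)ds$, whose solution is bounded on $[0,T]$ by Gr\"onwall, uniformly in $M$ since the kernel does not depend on $M$. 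A similar Volterra argument using Proposition \ref{nonexpl} (which grants pointwise finiteness of $\mathbb{E}[|\lambda_t|^2]$) and the stability of the scaled kernel propagates the bound to the second moment uniformly in $M$ and $i$; in particular the $\frac{1}{M^2}\sum_i\mathbb{E}[(\lambda_t^i)^2]$ term is $O(1/M)$. Plugging this back yields $\frac{d}{dt}F^M(t)\le C_1'F^M(t)+C_2'$ with constants independent of $M$, and since $F^M(0)=\frac{1}{M}\sum_i (X_0^i)^2$ is bounded uniformly in $M$ (because $X_0^i\to x$ by Assumption \ref{ass1}), a final application of Gr\"onwall's inequality delivers $\sup_{t\in[0,T],\,M\in\mathbb{N}}F^M(t)<\infty$, which is the desired bound for $n=2$.
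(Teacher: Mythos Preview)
Your proposal is correct and follows essentially the same It\^o--Gr\"onwall strategy as the paper's proof: apply It\^o's formula to a power of $X_t^i$, bound the drift cross-terms by elementary inequalities, control the jump contribution through a uniform-in-$M$ moment bound on the Hawkes intensity (the paper invokes Proposition~\ref{nonexpl} directly, while you make the Volterra/Gr\"onwall step for $\mathbb{E}[\lambda_t^i]$ explicit), and close with Gr\"onwall. Your reduction of the case $n=1$ to $n=2$ via $|x|\le 1+x^2$ is in fact cleaner than the paper's formal application of It\^o to $|X_t^i|^n$ for $n=1$.
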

\begin{proof}
Let $n\in\{1,2\}$. Recall the constant $C_p$ bounding the parameters $(p^i,X_0^i)$ from assumption \ref{ass1}. From It\^o's formula we have
\begin{align}
\mathbb{E}\left[|X_t^i|^n\right] =& \mathbb{E}\left[|X_0^i|^n\right] +
a^i\mathbb{E}\left[\int_0^tn|X_s^i|^{n-1}(\bar X_s-X^i_s)ds\right] +
\frac{1}{2}(\sigma^i)^2\mathbb{E}\left[\int_0^tn(n-1)|X_s^i|^{n-2}\right]\\
&+\sigma^i\mathbb{E}\left[\int_0^tn|X_s^i|^{n-1}dW_s^i\right] +
\mathbb{E}\left[\int_0^t\left[|X_s^i+c^i|^n-|X_s^i|^n\right]dN_s^i\right].
\end{align}
Using Young's inequality we have
\begin{align}
a^in{X_t^i}^{n-1}\bar X_t-a^in|X_t^i|^{n-1}X_t^i&\leq
a^i\frac{n}{M}\sum\limits_{k=1}^M|X_t^i|^{n-1}|X_t^k|-a^in|X_t^i|^{n}\\ &\leq
C_p\frac{1}{M}\sum\limits_{k=1}^M|X_t^k|^n+(2n-1)C_p|X_t^i|^n.
\end{align}
Applying Young's inequality twice yields
\begin{align}
\frac{n(n-1)}{2}(\sigma^i)2|X_t^i|^{n-2}&\leq \frac{n(n-1)}{2}\left(\frac{n-2}{n-1}|X_t^i|^{n-1}+\frac{1}{n-1}(\sigma^i)^{2n}\right)\\
&\leq \frac{n(n-1)}{2}\left(\frac{n-2}{n}|X_t^i|^{n}+\frac{1}{n}+\frac{1}{n-1}C_p\right).
\end{align}
Finally, using Young's inequality and Proposition \ref{nonexpl} there exists a constant $C_n$ independent of $M$ such that %\blu{(Check, because not super precise here... Also we need bound on $\lambda^2$...)}
\begin{align}
\mathbb{E}\left[\int_0^t\left[|X_s^i+c^i|^n-|X_s^i|^n\right]dN_s^i\right]&=\mathbb{E}\left[\int_0^t\left[|X_s^i+c^i|^n-|X_s^i|^n\right]\lambda_s^ids\right]\\
&\leq \frac{1}{2}\mathbb{E}\left[\int_0^t|c^iX_s^i|^{2(n-1)}+|c^i|^{2n}ds\right]+\frac{1}{2}\mathbb{E}\left[\int_0^t(\lambda_s^i)^2ds\right]\\
&\leq C_n(1+\mathbb{E}\left[\int_0^t|X_s^i|^nds\right].
\end{align}
The statement then follows from applying Gronwall's Lemma and the fact that the limiting constants are independent of $M$.
\end{proof}

In order to conclude weak convergence of the empirical measure $\nu_t^M$ to $\nu_t$ we need to determine the limiting martingale problem (as done in Section \ref{sec41}), show uniqueness of the limit point and its existence (i.e. tightness of the sequence of measure-valued processes). We provide here a sketch of the proof for the latter. We have to prove that the sequence of measure-valued processes $\{\nu^M\}_{M\in\mathbb{N}}$ defined by \eqref{eq:nuemp} are relatively compact when viewed as a sequence of random processes on the Skorokhod space $D_S([0,\infty])$, the collection of c\`adl\`ag functions from $[0,\infty)$ to $S$. This is necessary to ensure that the laws of $\nu^M$ have at least one limit point  (see also Chapter 2 and 3 of \citet{ethier86}). The complication arising from using a Hawkes process is the feedback loop in the intensity, however due to Theorem \ref{lemmaint} we know that the intensity is bounded and thus the system will not explode. The relative compactness will be implied by the following two Lemmas: Lemma \ref{rc1} on compact containment and Lemma \ref{rc2} on the regularity of the $\nu^M$'s.
% Compact containment: for each m\in mathbb{N} there will be a set K such that \nu_t^M will belong to that set for t\in [0,T] with high probablity
% Regularity: \nu_t^M - \nu_s^M is bounded by a function (t-s)
% In Giesecke this all follows also from Lemma 3.4 which basically bounds the intensity. In our case we have the non-explosive case (see Laub) when \alpha/\beta < 1, and since we take \alpha=1/M\alpha, this should hold. So in our case the non-epxlosiveness is determined in proposition 3.4, and maybe it will be good to mention additionally this requirement, or check the relation between this requirement and spectral radius thing, because I believe it is similar since the spectral radius in our case is something like \alpha/\beta. Using then the fact that the intensity is non-explosive (where explosive means N_t-N_s = \infty for (t-s)<\infty) we have bounds similar to Capponi and Giesecke... 
\begin{lemma}\label{rc1}
For every $T>0$ and any smooth function $f\in C^\infty(\mathcal{O})$, we have
\begin{align}
\lim_{m\to\infty}\sup\limits_{M\in\mathbb{N}}\mathbb{P}\left(\sup\limits_{0\leq t\leq
T}|\nu_t^M(f)|\geq m\right)=0.
\end{align}
\end{lemma}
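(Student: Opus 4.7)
The plan is to combine Markov's inequality with a uniform-in-$M$ $L^2$ estimate on the running supremum of $\nu^M(f)$. Specifically, for every $m>0$,
\begin{align*}
\mathbb{P}\Bigl(\sup_{0\le t\le T}|\nu_t^M(f)|\ge m\Bigr)\le \frac{1}{m^2}\,\mathbb{E}\Bigl[\sup_{0\le t\le T}|\nu_t^M(f)|^2\Bigr],
\end{align*}
so it suffices to show $K_T:=\sup_{M\in\mathbb{N}}\mathbb{E}[\sup_{0\le t\le T}|\nu_t^M(f)|^2]<\infty$; the claim then follows by letting $m\to\infty$.

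First I would start from the It\^o expansion \eqref{eq:nuito}, writing $\nu_t^M(f)$ as the sum of the initial value $\nu_0^M(f)$, three drift integrals with integrands $\nu_s^M(\mathcal{L}^1 f)\nu_s^M(I)$, $\nu_s^M(\mathcal{L}^2 f)$, $\nu_s^M(\mathcal{L}^3 f)$, the Brownian martingale $\mathcal{M}_t^W:=\frac{1}{M}\sum_i\int_0^t\sigma^i\partial_xf(X_s^i)\,dW_s^i$, the compensated Hawkes martingale $\mathcal{M}_t^N:=\frac{1}{M}\sum_i\int_0^t(f(X_{s-}^i+c^i)-f(X_{s-}^i))\,d\tilde N_s^i$, and the jump-compensator drift $\frac{1}{M}\sum_i\int_0^t(f(X_{s-}^i+c^i)-f(X_{s-}^i))\lambda_s^i\,ds$. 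Using the elementary bound $(a_1+\dots+a_k)^2\le k(a_1^2+\dots+a_k^2)$, it is enough to bound $\mathbb{E}[\sup_t(\cdot)^2]$ on each piece separately.

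Second, the initial term is bounded by $\|f\|_\infty^2$ uniformly in $M$ via Assumption \ref{ass1}. For the Brownian martingale, Doob's $L^2$ maximal inequality and independence of the $W^i$'s yield
\begin{align*}
\mathbb{E}\Bigl[\sup_{0\le t\le T}|\mathcal{M}_t^W|^2\Bigr]\le \frac{4}{M^2}\sum_{i=1}^M(\sigma^i)^2\,\mathbb{E}\int_0^T|\partial_xf(X_s^i)|^2\,ds = O(1/M),
\end{align*}
using the boundedness of $\sigma^i$ and $\partial_xf$. For the compensated Hawkes martingale, since by Definition \ref{defhawk} the $N^i$ never jump simultaneously, its predictable quadratic variation is diagonal and a Doob/BDG-type estimate gives
\begin{align*}
\mathbb{E}\Bigl[\sup_{0\le t\le T}|\mathcal{M}_t^N|^2\Bigr]\le \frac{4}{M^2}\sum_{i=1}^M\mathbb{E}\int_0^T(f(X_s^i+c^i)-f(X_s^i))^2\lambda_s^i\,ds,
\end{align*}
which, using $|c^i|=O(1/M)$, the boundedness of $\partial_xf$, and the $L^2$ bound on $\lambda_s^i$ from Proposition \ref{nonexpl}, is $O(1/M^3)$. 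For the drift pieces the only nontrivial contribution is the interaction term $\int_0^t\nu_s^M(\mathcal{L}^1f)\nu_s^M(I)\,ds$; by Cauchy--Schwarz and Jensen, $\mathbb{E}[|\nu_s^M(I)|^2]\le \frac{1}{M}\sum_{i=1}^M\mathbb{E}[|X_s^i|^2]$, which is bounded uniformly in $(s,M)\in[0,T]\times\mathbb{N}$ by Lemma \ref{lemmabound}, and combined with the boundedness of $a^i\partial_xf$ this gives a uniform $L^2$ bound on the time integral. The $\mathcal{L}^2f$ drift is handled analogously via the first-moment estimate of Lemma \ref{lemmabound}, the $\mathcal{L}^3f$ drift by the boundedness of $(\sigma^i)^2\partial_{xx}f$, and the jump-compensator drift by $|c^i|=O(1/M)$ together with the intensity bound.

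The main obstacle is the coupled product $\nu_s^M(\mathcal{L}^1f)\nu_s^M(I)$ arising from the mean-field interaction, which prevents a direct reduction to a sum of $M$ independent summands; this is precisely why the a priori moment estimate in Lemma \ref{lemmabound} is needed as an input, rather than as a byproduct of a self-contained Gronwall loop. A secondary technical subtlety is the dependence of the jump contributions on the stochastic intensity $\lambda_s^i$, but this is precisely what Proposition \ref{nonexpl} handles, giving the required $L^2$ control uniformly in $i$ and $t\in[0,T]$.
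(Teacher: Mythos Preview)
Your proposal is correct and follows essentially the same route as the paper: decompose $\nu_t^M(f)$ via the It\^o expansion \eqref{eq:nuito}, bound each piece separately using the moment estimate of Lemma \ref{lemmabound} and the intensity bound of Proposition \ref{nonexpl}, and conclude by Markov's inequality. The only cosmetic differences are that the paper works with the $L^1$ norm of the running supremum rather than $L^2$, and treats the jump contribution as a single term $D_t^M=\frac{1}{M}\sum_i\int_0^t(f(X_{s-}^i+c^i)-f(X_{s-}^i))\,dN_s^i$ bounded directly via the mean-value theorem and $\mathbb{E}[N_T^i]=\int_0^T\mathbb{E}[\lambda_s^i]\,ds$, rather than splitting it into compensated martingale plus compensator as you do.
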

\begin{proof}
From \eqref{eq:nuito} we have the following decomposition
\begin{align}\label{eq:nudecomp}
\nu^M_t(f) = \nu_0^M(f) + A_t^M + B_t^M + C_t^M + D_t^M,
\end{align}
where we have defined
\begin{align}\label{eq:defops}
A_t^M&:=\frac{1}{M}\int_0^t\sum\limits_{i=1}^Ma^i\partial_x f(X_s^i)(\nu_s^M(I)-X_s^i)ds,\\
B_t^M&:=\frac{1}{2M}\int_0^t\sum\limits_{i=1}^M(\sigma^i)^2\partial_{xx}f(X_s^i)ds,\\ C_t^M&:=
\frac{1}{M}\int_0^t\sum\limits_{i=1}^M\left(\sigma^i\partial_xf(X_s^i)dW_s^i\right),\\
D_t^M&:=\int_0^t\left[\frac{1}{M}\sum\limits_{i=1}^M(f(X_s^i+c^i)-f(X_{s-}^i))\right]dN_s^i.
\end{align}
Then we need to bound $\mathbb{E}\left[\sup\limits_{0\leq t\leq T}|(\cdot)_t^M|\right]$ for each
of the terms defined above. Denote for $f\in C^\infty(\mathcal{O})$ the supremum norm with
$||f||=\sup\limits_{(p,x)\in\mathcal{O}}|f(p,x)|$. We will use the dominating constant $C_p$ from
assumption \ref{ass1}. For $A_t^M$, $B_t^M$, $C_t^M$ the estimates are similar to
\citet{capponi15} and we omit the details here and just give the estimates
\begin{align}
&\mathbb{E}\left[\sup\limits_{0\leq t\leq T}|A_t^M|\right]\leq C_p\left\|\frac{\partial
f}{\partial x}\right\|\int_0^T\frac{1}{M}\sum\limits_{i=1}^M\mathbb{E}\left[|X_s^i|^2\right]ds +
C_p\left\|\frac{\partial f}{\partial x}\right\|,\\ &\mathbb{E}\left[\sup\limits_{0\leq t\leq
T}|B_t^M|\right]\leq \frac{C_p}{2}\left\|\frac{\partial^2 f}{\partial x^2}\right\|T,\\
&\mathbb{E}\left[\sup\limits_{0\leq t\leq T}|C_t^M|\right]\leq C_TC_p\left\|\frac{\partial
f}{\partial x}\right\|(T+1).
\end{align}
Then we have by the mean-value theorem and using Proposition \ref{nonexpl} which implies the
existence of a constant $C_\lambda$ such that $\mathbb{E}[\lambda^i_t]<C_\lambda$ that
\begin{align}
\mathbb{E}\left[\sup\limits_{0\leq t\leq T}|D_t^M|\right]&\leq
\sum\limits_{i=1}^M\mathbb{E}\left[\int_0^T\frac{1}{M}|f(X_s^i+c^i)-f(X_{s-}^i)|dN_s^i\right]\\
&\leq \left\|\frac{\partial f}{\partial
x}\right\|\frac{1}{M}\sum\limits_{i=1}^Mc^i\int_0^T\mathbb{E}[\lambda_s^i]ds\\ &\leq
\left\|\frac{\partial f}{\partial x}\right\| C_pC_\lambda T.
\end{align}
Using Lemma \ref{lemmabound}, we can find a positive constant $C$ such that
\begin{align}
\sup\limits_{M\in\mathbb{N}}\mathbb{E}\left[\sup\limits_{0\leq t\leq
T}\left|\nu_t^M(f)\right|\right]<C.
\end{align}
\end{proof}

Define $\mathbb{E}_t[\cdot]:=\mathbb{E}[\cdot|\mathcal{F}_t]$.
\begin{lemma}\label{rc2}
Let $h(x,y)=|x-y|\wedge 1$ for any $x,y\in\mathbb{E}$. Then there exists a positive random
variable $H_M(\gamma)$ with $\lim\limits_{\gamma\rightarrow
0}\sup\limits_{M\in\mathbb{N}}\mathbb{E}[H_M(\gamma)]=0$ such that for all $0\leq t\leq T$, $0\leq
u\leq \gamma$ and $0\leq v\leq \gamma\wedge 1$, we have
\begin{align}
\mathbb{E}_t\left[h^2(\nu_{t+u}^M(f),\nu_t^M(f))h^2(\nu_t^M(f),\nu_{t-v}^M(f)\right]\leq \mathbb{E}_t[H_M(\gamma)],
\end{align}
where the function $f\in C^{\infty}(\mathcal{O})$.
\end{lemma}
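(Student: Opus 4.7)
The plan is to apply an Aldous-Kurtz style increment estimate, exploiting the semimartingale decomposition \eqref{eq:nudecomp} of $\nu^M_t(f)$ into the absolutely continuous pieces $A^M,B^M$, the Brownian martingale $C^M$, and the Hawkes-driven process $D^M$. First I would exploit that $\nu_{t-v}^M(f)$ is $\mathcal{F}_t$-measurable and that $h\leq 1$ to factor
\begin{align}
\mathbb{E}_t\bigl[h^2(\nu_{t+u}^M(f),\nu_t^M(f))\,h^2(\nu_t^M(f),\nu_{t-v}^M(f))\bigr]
&= h^2(\nu_t^M(f),\nu_{t-v}^M(f))\,\mathbb{E}_t\bigl[h^2(\nu_{t+u}^M(f),\nu_t^M(f))\bigr]\\
&\leq \mathbb{E}_t\bigl[|\nu_{t+u}^M(f)-\nu_t^M(f)|^2\bigr],
\end{align}
using $h(x,y)\leq|x-y|$. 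It then suffices to produce a deterministic bound $H_M(\gamma)$ (hence $\mathbb{E}_t[H_M(\gamma)]=H_M(\gamma)$) on the right-hand side, uniformly in $t\in[0,T]$, $u\in[0,\gamma]$, and in $M$.

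Using $(a+b+c+d)^2\leq 4(a^2+b^2+c^2+d^2)$ and the decomposition \eqref{eq:nudecomp}, I would bound each of the four conditional second moments separately. For the drift pieces $A^M$ and $B^M$ I would apply Cauchy-Schwarz in time to get
\begin{align}
\mathbb{E}_t\bigl[|A_{t+u}^M-A_t^M|^2\bigr] &\leq u\int_t^{t+u}\mathbb{E}_t\Bigl[\Bigl|\tfrac{1}{M}\sum_i a^i\partial_xf(X_s^i)(\nu_s^M(I)-X_s^i)\Bigr|^2\Bigr]ds,\\
\mathbb{E}_t\bigl[|B_{t+u}^M-B_t^M|^2\bigr] &\leq u^2\bigl(\tfrac{C_p}{2}\|\partial_{xx}f\|\bigr)^2,
\end{align}
and then use Lemma \ref{lemmabound} together with Assumption \ref{ass1} to bound the first display by $C\,u^2$ with a constant $C$ independent of $M$. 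For the Brownian martingale $C^M$, the independence of the $W^i$ and Itô's isometry give
\begin{align}
\mathbb{E}_t\bigl[|C_{t+u}^M-C_t^M|^2\bigr]=\frac{1}{M^2}\sum_{i=1}^M(\sigma^i)^2\,\mathbb{E}_t\Bigl[\int_t^{t+u}|\partial_xf(X_s^i)|^2ds\Bigr]\leq \frac{u\,C_p^2\|\partial_xf\|^2}{M}.
\end{align}

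The main obstacle, and the step that requires the most care, is the Hawkes increment $D^M_{t+u}-D^M_t$. I would split $dN^i_s=d\tilde N^i_s+\lambda^i_s\,ds$, so that
\begin{align}
D_{t+u}^M-D_t^M=\int_t^{t+u}\tfrac{1}{M}\sum_{i=1}^M G^i_s\,d\tilde N^i_s+\int_t^{t+u}\tfrac{1}{M}\sum_{i=1}^M G^i_s\,\lambda^i_s\,ds,
\end{align}
with $G^i_s:=f(X_{s-}^i+c^i)-f(X_{s-}^i)$ satisfying $|G^i_s|\leq\|\partial_xf\|\,|c^i|\leq C_p\|\partial_xf\|/M$ by Assumption \ref{ass1}. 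Using orthogonality of the $\tilde N^i$ (no simultaneous jumps) and the Itô isometry for compensated point processes, together with Proposition \ref{nonexpl} which yields $\mathbb{E}[\lambda^i_s]\leq C_\lambda$ and $\mathbb{E}[(\lambda^i_s)^2]<\infty$, I obtain
\begin{align}
\mathbb{E}_t\bigl[|D_{t+u}^M-D_t^M|^2\bigr]\leq \tfrac{C_p^2\|\partial_xf\|^2 C_\lambda}{M^2}\,u + \tfrac{C_p^2\|\partial_xf\|^2 C_\lambda^2}{M^2}\,u^2,
\end{align}
where the Cauchy-Schwarz inequality is used on the compensator term. Summing the four estimates gives a bound of the form $H_M(\gamma)=K_1\gamma+K_2\gamma^2$ with constants $K_1,K_2$ independent of $M$, which clearly satisfies $\lim_{\gamma\to 0}\sup_M\mathbb{E}[H_M(\gamma)]=0$, completing the proof. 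The delicate point throughout is confirming that the Hawkes self-excitation does not produce $M$-dependent blow-ups in the constants; this is precisely what the normalisation $g^{i,j}=g/M$ and $c^i=\hat c^i/M$ guarantee.
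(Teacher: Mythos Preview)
Your overall strategy coincides with the paper's: use the semimartingale decomposition \eqref{eq:nudecomp}, split the Hawkes increment $D^M$ into the compensated martingale and the compensator, and bound each piece by something that vanishes as $\gamma\to 0$ uniformly in $M$. Your initial factorisation, pulling out the $\mathcal{F}_t$-measurable factor $h^2(\nu_t^M(f),\nu_{t-v}^M(f))\leq 1$, is in fact cleaner than what the paper writes explicitly.

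There is, however, one genuine gap. You announce a \emph{deterministic} $H_M(\gamma)=K_1\gamma+K_2\gamma^2$, but the bounds you derive for the $A^M$ term and the Hawkes term are \emph{conditional} second moments involving $\mathbb{E}_t[|X_s^i|^2]$ and $\mathbb{E}_t[\lambda_s^i]$ for $s>t$, which are random. Lemma~\ref{lemmabound} and Proposition~\ref{nonexpl} give only \emph{unconditional} bounds $\sup_t\frac{1}{M}\sum_i\mathbb{E}[|X_t^i|^2]<\infty$ and $\mathbb{E}[(\lambda_t^i)^2]<\infty$; they do not control $\mathbb{E}_t[\,\cdot\,]$ almost surely. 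So the step where you replace $\mathbb{E}_t[\lambda_s^i]$ by $C_\lambda$ (and similarly for $X_s^i$) is not justified, and a deterministic $H_M(\gamma)$ is not available.

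The repair is exactly what the lemma's phrasing already allows: take $H_M(\gamma)$ to be \emph{random}. The paper does this by retaining factors of the form
\[
\gamma^{1/4}\,\frac{1}{M}\sum_{i=1}^M\Bigl(1+\int_0^T(\lambda_s^i)^2\,ds\Bigr)
\]
inside $H_M(\gamma)$ (and analogous terms involving $|X_s^i|^2$ for the $A^M$ piece, handled by reference to \citet{capponi15}), and then invokes Proposition~\ref{nonexpl} and Lemma~\ref{lemmabound} only at the level of $\mathbb{E}[H_M(\gamma)]$ to conclude $\sup_M\mathbb{E}[H_M(\gamma)]\to 0$. If you rewrite your $A^M$ and $D^M$ estimates in this form, your argument goes through; the $\gamma$-rate you obtain is actually sharper than the paper's $\gamma^{1/4}$, but either suffices for tightness.
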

\begin{proof}
We have from \eqref{eq:nudecomp}
\begin{align}
(\nu_{t+u}^M-\nu_t^M)(f) = A_{t+u}^M-A_t^M+B_{t+u}^M-B_t^M+C_{t+u}^M-C_t^M + \mathcal{M}_{t+u}^M-\mathcal{M}_t^M+P_{t+u}^M-P_t,
\end{align}
where $A_t^M$, $B_t^M$, $C_t^M$ are defined in \eqref{eq:defops} and
\begin{align}
&\mathcal{M}_t^M:=\int_0^t\left[\frac{1}{M}\sum\limits_{i=1}^M(f(X_s^i+c^i)-f(X_{s-}^i))\right]d\tilde
N_s^i,\\
&P_t^M:=\int_0^t\left[\frac{1}{M}\sum\limits_{i=1}^M(f(X_s^i+c^i)-f(X_{s-}^i))\right]\lambda_s^ids,
\end{align}
where we have used the fact that the compensated counting process $\tilde N_t^i:= N_t^i-\int_0^t\lambda^i_sds$ is a $\mathcal{F}_t$-local martingale. We have
\begin{align}
h^2\left(\nu_{t+u}^M(f),\nu_t^M(f)\right)\leq& 16\big[\left|A_{t+u}^M-A_t^M\right|^2+\left|B_{t+u}^M-B_t^M\right|^2+\left|C_{t+u}^M-C_t^M\right|^2\\
&+\left|\mathcal{M}_{t+u}^M-\mathcal{M}_t^M\right|^2+\left|P_{t+u}^M-P_t^M\right|^2\big].
\end{align}
Let $0\leq u\leq \gamma$. For the bounds on the first three differences we refer to Lemma 3.5 in \citet{capponi15}. For the fourth difference, using the martingale property and It\^o Isometry for the martingale $(\mathcal{M}_t^M)$ with quadratic variation $[\tilde N_t,\tilde N_t]=N_t$, %\blu{(CHECK, see also Bacry)}
 the mean-value theorem, Assumption \ref{ass1} and Proposition \ref{nonexpl}, and the bound (6.1) in \citet{giesecke13} we find
\begin{align}
\mathbb{E}_t\left[\left|\mathcal{M}_{t+u}^M-\mathcal{M}_t^M\right|^2\right]&=\mathbb{E}_t\left[\left|\mathcal{M}_{t+u}^M\right|^2-\left|\mathcal{M}_t^M\right|^2\right]\\
&=\sum\limits_{i=1}^M\mathbb{E}_t\left[\int_t^{t+u}\frac{1}{M}\left|f(X_s^i+c^i)-f(X_{s-}^i)\right|^2dN_s^i\right]\\
&=\sum\limits_{i=1}^M\mathbb{E}_t\left[\int_t^{t+u}\frac{1}{M}\left|f(X_s^i+c^i)-f(X_{s-}^i)\right|^2\lambda^i_sds\right]\\
&\leq C_p\left\|\frac{\partial f}{\partial
x}\right\|^2\frac{1}{M}\sum\limits_{i=1}^M\mathbb{E}_t\left[\int_t^{t+u}\lambda^i_sdt\right]\\
&\leq C_p\frac{1}{2}\left\|\frac{\partial f}{\partial
x}\right\|^2\gamma^{\frac{1}{4}}\frac{1}{M}\sum\limits_{i=1}^M\mathbb{E}\left[1+\int_0^{T}(\lambda^i_s)^2dt\right].
\end{align}
With the mean-value theorem and Assumption \ref{ass1} we find
\begin{align}
\left|P_{t+u}^M-P_t^M\right|&=\left|\sum\limits_{i=1}^M\int_t^{t+u}\left[\frac{1}{M}(f(X_s^i+c^i)-f(X_{s-}^i))\right]\lambda_s^ids\right|\\
&\leq C_p\left\|\frac{\partial f}{\partial
x}\right\|\frac{1}{M}\sum\limits_{i=1}^M\int_t^{t+u}|\lambda^i_s|ds\\ &\leq
C_p\frac{1}{2}\left\|\frac{\partial f}{\partial
x}\right\|\gamma^{\frac{1}{4}}\frac{1}{M}\sum\limits_{i=1}^M\left(1+\int_0^{T}(\lambda^i_s)^2dt\right).
\end{align}
Then using Lemma \ref{lemmabound} and Proposition \ref{nonexpl} we can finish the proof.  %note we do not need lemma b.2 as in capponi since we dont have the nu(I) part in the martingale difference
\end{proof}

Then if uniqueness of the limit point $\nu_t$ holds (see e.g. the proof of Lemma C.1 in \cite{capponi15}), we can thus conclude that the sequence $\nu_t^M$ converges weakly to the limit point $\nu_t$ and we thus conclude that weak convergence holds.

\bibliographystyle{plainnat}
\bibliography{biblio}
\end{document}